\newtheorem{cat}{Category}
\newtheorem{remark}{Remark}
\newtheorem{definition}{Definition}
\newtheorem{theorem}{Theorem}
\newtheorem{lemma}{Lemma}
\newtheorem{proposition}{Proposition}
\def\BibTeX{{\rm B\kern-.05em{\sc i\kern-.025em b}\kern-.08em
T\kern-.1667em\lower.7ex\hbox{E}\kern-.125emX}}
\newcommand{\bbT}{\mathbb{T}}
\newcommand{\bbS}{\mathbb{S}}
\newcommand{\bbX}{\mathbb{X}}
\newcommand{\rom}[1]{\uppercase\expandafter{\romannumeral #1\relax}}
\newglossaryentry{domain}{
    name={$\mathbb{T}$},
    description={The domain of the function $f$},
    sort={d}
}
\newglossaryentry{groundset}{
    name={$\mathbb{S}$},
    description={The ground set containing all possible components of strings},
    sort={s}
}
    \newglossaryentry{horizon}{
    name={$K$},
    description={The horizon length for our decision problem},
    sort={h}
}
\newglossaryentry{allowedelts}{
    name={$\mathbb{S}(S)$},
    description={The set of elements $s \in \mathbb{S}$ for which $Ss \in \mathbb{T}$},
    sort={d}
}
\newglossaryentry{optimal}{
    name={$O_L$},
    description={The optimal string of actions for our optimization problem},
    sort={o}
}
\newglossaryentry{greedy}{
    name={$G_K$},
    description={The string of elements obtained by using the greedy algorithm},
    sort={g}
}
\newglossaryentry{increment}{
    name={$\Delta(S_i)$},
    description={Incremental difference of the function after adding a component $u$ to a string $S$},
    sort={d}
}
\newglossaryentry{gamma_G}{
    name={$\gamma_G$},
    description={The curvature constant from Conforti and Cornuejols which requires the computation of increments along the greedy sequence},
    sort={g}
}
\newglossaryentry{gamma_G''}{
    name={$\gamma_G''$},
    description={The curvature constant from Conforti and Cornuejols which assumes the function to be defined on the entire ground set},
    sort={g}
}
\newglossaryentry{curlyineq}{
    name={$\preccurlyeq$},
    description={Prefix notation},
    sort={p}
}
\newglossaryentry{deltastar}{
    name={$\Delta^*(s)$},
    description={The minimum increment obtained by adding in $s$ to a string comprised of elements in $|\mathbb{S}\setminus s|$},
    sort={d}
}
\newglossaryentry{sigma}{
    name={$\sigma_{\mathbb{T}}(s, K)$ \hspace{1em}},
    description={ \hspace{1em} Strings $S$ of length $K-1$ such that $Ss$ belongs to $\mathbb{T}$},
    sort={s}
}
\newglossaryentry{gamma_G'}{
    name={$\gamma_G'$},
    description={The curvature constant from Conforti and Cornuejols which involves computations at the last step of the greedy sequence},
    sort={g}
}
\begin{document}
\sloppy
\title{A Performance Bound for the Greedy Algorithm in a Generalized Class of String Optimization Problems}
\author{Brandon~Van~Over, \IEEEmembership{Graduate Student Member, IEEE}, Bowen~Li, \IEEEmembership{Graduate Student Member, IEEE}, Edwin~K.~P.~Chong, \IEEEmembership{Fellow, IEEE} and Ali~Pezeshki, \IEEEmembership{Senior Member, IEEE}
\thanks{Manuscript submitted on November 26th, 2024 for review.} 
\thanks{This work is supported in part by the AFOSR under award FA8750-20-2-0504 and by the NSF under award CNS-2229469. \textit{(Brandon Van Over and Bowen Li contributed equally to this work.)}}
\thanks{The authors are with the Department of Electrical and Computer Engineering, Colorado State University, Fort Collins, CO 80523, USA (email: b.van\_over@colostate.edu; bowen.li@colostate.edu; \qquad \qquad
edwin.chong@colostate.edu; ali.pezeshki@colostate.edu).
}
}

\maketitle

\begin{abstract}
We present a simple performance bound for the greedy scheme in string optimization problems. Our approach generalizes the family of greedy curvature bounds established by Conforti and Cornuejols (1984). Specifically, we examine three bounds they introduced for evaluating the performance of the greedy scheme in maximizing monotone submodular set functions. We first generalize two of these bounds to string optimization problems in a manner that includes maximizing monotone submodular set functions as a special case. Next, we derive a simpler and more computable bound that applies to a broader class of functions with string domains. We then prove that our bound is superior to two of their bounds and provide a counterexample to show that the third bound is incorrect under the assumptions in Conforti and Cornuejols (1984). We demonstrate our results through two applications. First, we apply our bound to sensor coverage problems with both monotone set and string submodular objective functions. The second application is a social welfare maximization problem involving a monotone non-submodular black-box utility function.

\end{abstract}

\begin{IEEEkeywords}
Greedy algorithm, greedy curvature, performance (ratio) bound, sensor coverage, string optimization, subadditivity, submodularity, welfare maximization.
\end{IEEEkeywords}

\section{Introduction}
\label{sec:introduction}
Many sequential decision-making and machine-learning problems involve optimally choosing a \emph{string} (ordered set also referred to as  \textit{sequence} in \cite{golovin2011adaptive}) of actions  over a finite horizon 
to maximize a given objective function. String optimization problems have an added complexity relative to \emph{set} optimization problems because the value of the objective function \emph{also} depends on the order of the actions. The dependence on the order creates a drastically larger space of possible solutions, and for sufficiently large spaces makes the problem computationally intractable. Such computational infeasibility motivates approximating the optimal solution by various computable schemes. One of the most common approximation schemes is the \emph{greedy scheme}, where at each step we select the action that maximizes the increment in the objective function. An obvious question arises: How well does the greedy scheme perform for our \emph{string optimization} problem relative to the optimal solution? 

In this paper, we derive a \emph{ratio bound} for the performance of greedy solutions relative to the optional solution. Such bounds have the form
\[
\frac{f(G_K)}{f(O_K)}\geq \beta,
\]
where $f(G_K)$ is the objective function value of the greedy solution, $f(O_K)$ is the value of the optimal solution, and $K$ is the horizon length. To create such a ratio bound, we use the following two steps:

\begin{enumerate}
    \item Find an upper bound $B \geq f(O_K)$.
    \item Divide $f(G_K)$ by $B$ to obtain $\beta = f(G_K)/B \leq f(G_K)/f(O_K)$.
\end{enumerate}
\noindent\textbf{Main contributions:} In \cite{conforti1984submodular}, greedy curvature bounds were presented that have found applications in robotics, machine learning, economics, and many other fields. The main contribution of this paper is ultimately showing that all of the greedy curvature bounds in \cite{conforti1984submodular} are now superseded by our superior performance bound. More specifically, the contributions of this paper are as follows: 
\begin{enumerate}
    \item We prove an extension of the $\alpha_G$ and $\alpha_G''$ bounds in \cite{conforti1984submodular} to general string optimization problems with minimal assumptions. The result is shown in \thref{GenConforti}. 
    \item We introduce a performance bound for the greedy scheme in \thref{lemmatopk} which applies to a more general class of functions that includes monotone submodular functions, and whose string domains generalize the set matroids seen in \cite{conforti1984submodular}. We require that our function $f$ satisfies a single inequality that only involves the optimal string of actions, and we impose a condition on the domain of $f$ that guarantees that we can compare the function values associated with components of the optimal and greedy string. 
    \item Our bound is easily computable, does not involve computations past the horizon, and provides an individualized performance bound for each problem. 
    \item We prove that our bound is better than the $\alpha_G$ bound of \cite{conforti1984submodular} in \thref{superioritytheorem}.
    \item Assuming that computations past the horizon are possible, we show that our bound is superior to the $\alpha_G''$ bound. The result is also part of \thref{superioritytheorem}.
    \item We provide a counterexample to show that the $\alpha_G'$ bound does not actually hold under the claimed conditions in \cite{conforti1984submodular}.
    \item We demonstrate the superiority of our bound to those in \cite{conforti1984submodular} with applications to monotone submodular sensor coverage problems with both set and string domains. 
    \item We apply our result to an social welfare maximization problem that is neither submodular nor subadditive and obtain strong performance bounds. 
\end{enumerate}

\section{Previous Work}
 Most of the previous work on performance guarantees of the greedy scheme has focused on the case where functions are \emph{monotone submodular} set functions, and can be placed into one of two categories. In what follows below, $\beta$ is used to represent the lower bound on the performance ratio of the greedy scheme as in the previous section.

 \begin{cat}
 \textbf{Class Bounds}
 \end{cat} \par
In the beautiful results by Fisher et al.\  \cite{fisher1978analysis} and Nemhauser et al.\ \cite{nemhauser1978analysis}, it is proven that that $\beta = 1/2$ over a finite general set matroid \cite{fisher1978analysis} and $\beta = 1- ((K-1)/K)^{K} > (1-e^{-1})$ over a finite uniform set matroid with $K$ being the horizon length \cite{nemhauser1978analysis}. The $(1-e^{-1})$ result was then extended to arbitrary matroids in \cite{calinescu2011maximizing}. The same $(1-e^{-1})$ bound has been derived for some string optimization problems in \cite{streeter2008online} and \cite{alaei2021maximizing}. Such results are strong in that they apply to the entire class of all monotone submodular functions over all matroids satisfying certain conditions. In other words, for whichever representative problem we choose from this class, the performance of the greedy scheme will be no worse than their aforementioned bounds. One weakness of such bounds, however, is that they do  not distinguish cases where employing the greedy scheme performs far better than their performance bound, which can happen in many instances. 

 \begin{cat}
   \textbf{Computationally Infeasible/Post-Horizon Bounds}
 \end{cat} \par
Other approaches to giving performance bounds for greedy schemes involve a sort of ``wishful thinking" wherein it is assumed that we can measure how much the outputs of our function decrease for the function $f$ for all inputs. Constants referred to as \textit{curvatures} are then constructed using these computations, and different kinds of curvature-based bounds are established as seen in \cite{liu2019improved}, \cite{zhang2015string}, and \cite{wang2016approximation}. Unfortunately, the amount of computation required to compute such curvatures for large action/state spaces is often as complex as the original optimization problem.\par
Other bounds happen to be computable in such a way that is easily extendable to large problems, and yield some strong results such as that of \cite{welikala2022new}. Unfortunately, these results involve computing values of the function on sets whose size exceeds the horizon length, which in many applications is not possible.\\
\textbf{The Conforti \& Cornu\'{e}jols  Bounds}

 In \cite{conforti1984submodular}, a set of three greedy curvature bounds proposed involving the constants $\alpha_G$, $\alpha_G'$, and $\alpha_G''$  happens to resolve the previous problems because they are computable, they yield bounding results that are specific to the problem in question, and $\alpha_G$ does not require computations past the horizon. Our present work builds on the $\alpha_G$ and $\alpha_G''$ bounds, and we will show later that they are inferior to our bound.

\noindent\textbf{Organization:} We begin with an introduction to strings and operations on them to set the stage in Section \rom{3}. We then explain how to translate the greedy curvature bounds from \cite{conforti1984submodular} to our notation in Section \rom{4}, and in Section \rom{5} we state and prove all of our main mathematical results. In Section \rom{6} we show applications of our result by computing performance bounds for a sensor coverage problem and a social welfare maximization problem. We conclude in section \rom{7} with a summary and some ideas for further research.

\section{Preliminaries for String Optimization}

In this section, we introduce the notation and definitions we will need to formulate the general string optimization problem. Let $f$ be our objective function with domain \gls{domain}. We seek to solve the following optimization problem:
\begin{equation}
\label{string_opt}
\begin{aligned}
    & \text{maximize } f(S) \\
    & \text{subject to } |S| = K, \quad S \in \bbT.
\end{aligned}
\end{equation} 

\begin{remark}
    We will refer to \gls{horizon} as the \textit{horizon length} for our decision problem. We obtain a string $S \in \bbT$ by starting with an empty string, and, at each stage, choosing a single symbol to add to our string at each of the $K$ stages. We give the necessary background for this process below.
\end{remark}

\begin{enumerate}
    \item Let \gls{groundset} be the \textit{ground set}.
    \item Elements $s \in \mathbb{S}$ are referred to as symbols.
    \item Let $s_1,s_2,\ldots,s_k \in \mathbb{S}$. Then the ordered $k$-tuple $S= (s_1, s_{2}, \ldots, s_{k})$ is a \textit{string} with length $|S| = k$.
    \item Let $\varnothing$ be the empty string, i.e., the string that contains no symbols.    
    \item For a string $S = (s_1, \ldots s_k)$, we use $S_{i} = (s_1, \ldots, s_i) \text{ for } i \in \{1,\ldots,k\}$ to represent the first $i$ symbols of $S$.
    \item Let $\mathbb{S}^*$ be the set of all strings of arbitrary length whose symbols are from $\mathbb{S}$.
\end{enumerate}

\begin{remark}
Throughout the paper, we will avoid denoting strings using parentheses and instead write $(s_1, s_{2}, \ldots, s_{k})$ as $s_1s_{2} \cdots s_{k}$. Based on our definition of strings, we can see that two strings $S \text{ and } T$ are only equal when their lengths are equal, and $s_i = t_i$ for all $i \in \{1, \ldots, k\}$ where $k = |S| = |T|$. Thus, by the definitions above, we see that permutations of the symbols of a string might produce a distinct string from the original. 
\end{remark}

We now introduce an operation that can be performed on strings.
\begin{definition}
    Let  $S = s_{1}s_{2}\cdots s_{m}$ and $T = t_{1}t_{2}\cdots t_{n}$ belong to $\mathbb{S}^*$. The \textit{concatenation} of $S$ and $T$  is defined as the string $ST = s_{1}s_{2}\cdots s_{m}t_{1}t_{2}\cdots t_{n}$.
\end{definition}

The concatenation operation creates relationships between different strings that are central to the structure of $\bbT$, which necessitates the following definition. 
\begin{definition}
Let $P, S \in \mathbb{S}^*$. We say that $P$ is a \textit{prefix} of $S$, and write $P$\gls{curlyineq}$S$, if there exists a string $U \in \mathbb{S}^*$ such that $S = PU$.
\end{definition}

\begin{definition}
For any $S \in \bbT$, let \gls{allowedelts}$= \{s\in \mathbb{S}: Ss \in \bbT\}$.
\end{definition}

\begin{remark}
 The set $\bbS(S)$ contains the elements that are feasible to concatenate to a string $S$ and still remain in $\bbT$. The definition above will be used when referencing $\bbS(G_{k-1}) = \{s \in \bbS: G_{k-1}s \in \bbT\}$ and $\bbS(\varnothing) = \{s \in \bbS: s \in \bbT\}$. 
\end{remark}

\begin{definition}
    Any solution to the optimization problem (\ref{string_opt}) is said to be an \textit{optimal} string, and is denoted as \gls{optimal} $= o_{1}o_{2}\cdots o_{L}$. 
\end{definition}

\begin{definition}
    We define \gls{greedy} $= g_{1}g_{2} \cdots g_{K}$ to be a \textit{greedy solution} if for all $k \in \{1,2,\ldots,K\}$, 
    \[
    g_{k} = \arg\max_{s \in \mathbb{S}(G_{k-1})} f(g_{1}\cdots g_{k-1}s). 
    \]
\end{definition}

\begin{remark}
    Here, for simplicity of our arguments below we will assume that $L = K$ above, i.e., the greedy and optimal strings have the same length. Our formulation of the string optimization problem later on will allow for applications to problems wherein the optimal string has length smaller than that of the greedy string, and all of our arguments below still hold in this case. 
\end{remark}



\section{Notation}
Here, we explain how we translate results from the set-based notation in \cite{conforti1984submodular} to our string notation. In  \cite{conforti1984submodular}, a central definition is the \textit{discrete derivative} $\varrho_j(A) = f(A \cup \{j\}) - f(A)$ of $f$ along a set $A$, which measures how much a function changes after adding in an element $j$ to $A$. We now introduce our equivalent notation for strings.

\begin{definition}
    For $S \in \bbT$, and $u \in \mathbb{S}$, we define $\Delta(Su)= f(Su) - f(S)$, and refer to it as the \textit{increment of $f$ at the end of $Su$}. For the first $i$ elements $S_i$ of a string $S$, we define \gls{increment}$= f(S_i)-f(S_{i-1})$ and refer to this as the \textit{$i$th increment}.
\end{definition}

We now give a brief explanation by way of example as to how we rewrite the constants $\alpha_G \text{ and } \alpha_G''$ in \cite{conforti1984submodular} using our notation. Doing so for only $\alpha_G$ is sufficient as a very similar description produces the form for $\alpha_G''$.

We first rewrite $\alpha_G$ from \cite{conforti1984submodular} using our notation above:
\[
\alpha_G = \max_{2 \leq k \leq K} \max_{ s \in \mathbb{S}(G_{k-1}), f(s) > 0} 
    \left \{\frac{f(s)- \Delta(G_{k-1}s)}{f(s)} \right \}.
\]
Notice that the expression inside the curly braces can be written as $1-\Delta(G_{k-1}s)/f(s)$, and maximizing this expression is equivalent to minimizing $\Delta(G_{k-1}s)/f(s)$, which makes $1/(1-\alpha_G)$ equivalent to maximizing $f(s)/\Delta(G_{k-1}s)$. We then define:
\begin{equation}
\begin{aligned}
\mbox{\gls{gamma_G}} & = \frac{1}{1-\alpha_G}\\
& = \max_{2 \leq k \leq K} \max_{ s \in \mathbb{S}(G_{k-1}), \Delta(G_{k-1}s) > 0} \left \{ \frac{f(s)}{\Delta(G_{k-1}s)} \right \}.
\end{aligned}
\end{equation}

\begin{remark}
    Based on the definition used in \cite{conforti1984submodular}, the greedy algorithm continues as long as $\Delta(G_{k-1}g_k) \geq 0$. Because the value of $\gamma_G$ becomes undefined in the instances when $\Delta(G_{k-1}g_k) = 0$, we make assumption $\mathbf{A}_5$ later for certain proofs in the main results section. We note that imposing condition $\mathbf{A}_5$ is done solely for the purpose of generalizing the curvature constants of \cite{conforti1984submodular}, and is not necessary for our performance bound to hold.
\end{remark}

The definition of $\alpha_G$ in \cite{conforti1984submodular} generalizes easily to strings, but the same is not true for $\alpha_G''$. Implicit in the definition of $\alpha_G''$ is the assumption that for every element $s$ of the ground set of our set matroid, the function can be evaluated on the entire ground set of the matroid minus the element $s$. For set matroids, the order of the elements in the ground set does not influence the value of $f$. Because permutation invariance of the arguments is not a property shared between functions on strings and functions on sets, we need to introduce definitions to ensure that the string version of $\alpha_G''$ reduces to the definition in \cite{conforti1984submodular} for set matroids. First, we describe strings with length $K-1$ where $K$ is the horizon length, and reference the element $s$ that we are appending to the end of the string. We do so because such strings are used in the definition of $\alpha_G''$ in \cite{conforti1984submodular}.

\begin{definition}
    Let \gls{sigma} be the set of strings $S$ of length $K-1$ comprised of symbols in $\bbS$ such that $Ss \in \bbT$ . 
\end{definition}

\begin{remark}
    The restriction of ``strings $S$ of length $K-1$ comprised of symbols in $\bbS$ such that $Ss \in \bbT$" is made to exclude the strings $S$ composed of elements from $\bbS$ such that $Ss \not \in \bbT$. We see that $\gamma_G''$ generalizes $\alpha_G''$ when $\bbT$ is a string matroid and $K = |\bbS|$.
\end{remark}

Using the same algebraic manipulations used to obtain the formula for $\gamma_G$, we can obtain the formula for $\gamma_G''$ seen below. Because we seek to maximize $\gamma_G''$ and generalize $\alpha_G''$, we would like the denominator of the expression in \eqref{eq:gammagdp} below to be as small as possible. Therefore, for a specific $s \in \bbS$ we would like to choose the string that produces the smallest increment increase under $f$. Our discussion above motivates the following definition. 
\begin{definition}
    Let $s \in \bbS$. Then define \gls{deltastar}$= \min_{S \in \sigma_{\bbT}(s, K)} (f(Ss) - f(S))$.
\end{definition}

\indent We would like our generalized version of the $\alpha_G''$ bound to be well defined for strings, and to produce as good of a performance bound as possible. In our notation, we rewrite the $\alpha_G''$ from \cite{conforti1984submodular} as \[ \alpha_G'' = \max_{s\in \mathbb{S}\setminus{\{g_1\}}} \left\{ 1- \frac{\Delta^*(s)}{f(s)}\right\} .\] For the same reason as for $\alpha_G$, we define 

\begin{equation}
\label{eq:gammagdp}
\mbox{\gls{gamma_G''}}= \frac{1}{1-\alpha_G''} = \max_{s\in \mathbb{S}\setminus{\{g_1\}}} \left \{\frac{f(s)}{\Delta^*(s)} \right \}
\end{equation}

\begin{remark}
We see that in the permutation-invariant case of set matroids where $K = |\bbS|$, the definition of $\Delta^*(s)$ reduces to the quantity $f((\bbS \setminus s)s) - f(\bbS \setminus s)$. Our definition of $\gamma_G^{''}$ then agrees with the expression $1/ (1-\alpha_G^{''})$ in \cite{conforti1984submodular}.
\end{remark}

\begin{remark}
The reader may recognize that for strings, the generalization of $\alpha_G''$ to $\gamma_G''$ is not computable for large $\bbS$. We will show later that this is inconsequential since our computable bound still beats this generalization.
\end{remark}

\section{A General Performance Bound}

\subsection{Main Results}
 We established in \cite{van2023improved} that \emph{all} problems to which the $\alpha_G$ bound of \cite{conforti1984submodular} applies are special cases of the string results presented below. In \cite{li2024bounds}, we simplified the results from \cite{van2023improved} and slightly generalized them. In this section, we prove results that are more general than both \cite{van2023improved} and \cite{li2024bounds} in that they apply to a wider class of functions with less restrictions on their domains. 
 
 We first introduce assumptions $\mathbf{A_1}$ through $\mathbf{A_7}$, which are used throughout the section. Afterward, we define the mathematical objects that our assumptions are supposed to generalize, namely, monotone set and string submodularity for functions, as well as finite rank set and string matroids. Then, we generalize the greedy curvature bounds based on $\gamma_{G}$ and $\gamma_{G}''$ in \cite{conforti1984submodular} to the string setting in ~\thref{GenConforti}. We then establish an easily computable bound relying on fewer assumptions in~\thref{lemmatopk}. Finally, ~\thref{superioritytheorem} shows that our bound is provably better than those generalized greedy curvature bounds in~\thref{GenConforti}.

\noindent\textbf{Assumptions:} The assumptions listed here are used in the proofs of all the results below. \\
$\mathbf{A_1}$: ($G_K$-feasiblity) For each $i \in \{1,\ldots,K\}$, $o_{i} \in \mathbb{S}(G_{i-1}) \cap \bbS(\varnothing)$. \\
$\mathbf{A_2}$: ($O_K$-subadditivity) $f(O_K) \leq \sum_{i=1}^K f(o_i)$.\\
$\mathbf{A_3}$: (Null on null)   $f(\varnothing) =0$ (recall $\varnothing$ is the empty string).\\
$\mathbf{A_4}$: (Nondecreasing) For all $P, S \in \bbT$ such that $P\preccurlyeq S$, $f(P) \leq f(S)$.\\
$\mathbf{A_5}$: ($\Delta$-positivity) For all $s \in \bbS(G_i)$ and $i \in \{1, \ldots, K-1\}$ we have $\Delta(G_is) > 0$.\\
$\mathbf{A_6}$: ($\Delta^*(u)$-Diminishing returns) For all $i \in \{1, \ldots, K\} \text{ and all } u \in \bbS(G_i)$, we have that $\Delta(G_iu) \geq \Delta^*(u)$.\\
$\mathbf{A_7}$: For all $u \in \bbS$,  $\Delta^*(u) >0$.

\begin{remark}
Given that we can assume without loss of generality that $f(\varnothing) = 0$, we note that assumption $\mathbf{A}_3$ is only included above because of its frequent use in the proofs that generalize the curvature constants of \cite{conforti1984submodular} to strings. We also include $\mathbf{A}_7$ solely for the purpose of ensuring that $\gamma_G''$ is well defined. 
\end{remark}

We now define mathematical objects whose properties are widely used in existing literatures on submodular optimization, and show that our assumptions listed above are generazlizations of these objects. 

\begin{definition}
Let $\bbS$ be any ground set and $\bbX$ a family of subsets of $\bbS$. We say that
$(\bbS, \bbX)$ is a \textit{set matroid of finite rank $K$} if 
\begin{enumerate}
    \item For all $S \in \bbX$, $|S| \leq K$.
    \item For all $S \in \bbX$, $T \subset S$ implies $T \in \bbX$.
    \item For all $S, T \in \bbX$ where $|T| +1 = |S|$, there exists $j \in S \setminus T$ such that $T \cup \{j\} \in \bbX$.
\end{enumerate}
\end{definition}

\begin{remark}
    Note that part $2$ of the above definition implies that $\mathbf{A}_1$ is a generalization of \cite{conforti1984submodular}, since $\{o_i\} \subset \{o_1, \ldots, o_K\}$ implies that each $\{o_i\} \in \bbS(\varnothing)$. The fact that $o_i \in \bbS(G_{i-1})$ follows from Lemma 2.2 of \cite{conforti1984submodular}. 
\end{remark}

\begin{definition}
We say that a function $f: \bbX \to \mathbb{R}$ on a finite rank set matroid $\bbX$ is \textit{monotone set submodular} if 

\begin{enumerate}
    \item $f$ is \textit{monotone}, i.e., for all $A \subseteq B \in \bbX, f(A) \leq f(B)$.
    \item $f$ has the \textit{diminishing returns} property, i.e., for all $ A, B \in \bbX$ such that $A \subseteq B$, and for all $a \in \bbS \text{ that are feasible at } A \text{ and } B, f(A \cup a)-f(A) \geq f(B \cup a) - f(B)$. 
\end{enumerate}
\end{definition}

\begin{remark}
Monotonicity is captured by $\mathbf{A}_4$. The diminishing-returns property defined above, along with assumption $\mathbf{A}_3$, tells us that $f(o_{j+1}) \geq f(O_{j+1}) - f(O_j)$. If we compute the sum of the terms on the left hand side of the inequality and the sum of the terms on the right hand side of the inequality for $j \in \{0, \ldots, K-1\}$, we obtain $\mathbf{A}_2$.
\end{remark}

\indent We now introduce the string generalizations of the above definitions. 

\begin{definition}
Let $A \in \bbS^*$ be a string. Then the \textit{components} of $A$ is the subset $C(A) \subset \bbS$ where $a \in C(A)$ if $a$ appears at least once in the string $A$.
\end{definition}

\begin{definition}
Let $\bbS$ be our ground set, and $\bbT \subset \bbS^*$. Then $\bbT$ is a \textit{string matroid of finite rank $K$} if

\begin{enumerate}
    \item For all $A \in \bbT$, $|A| \leq K$.
    \item For all $B \in \bbT$, $A \preccurlyeq B$ implies $A \in \bbT$.
    \item For all $A, B \in \bbT$ where $|A|+1 = |B|$, there exists $a \in C(B)$ such that $A a \in \bbT$.
\end{enumerate}
\end{definition}

\begin{definition}
We say that a function $f: \bbT \to \mathbb{R}$ on a finite rank string matroid $\bbT$ is \textit{monotone string submodular} if 

\begin{enumerate}
    \item $f$ has the \textit{forward monotone} property, i.e., for all $A \preccurlyeq B \in \bbT, f(A) \leq f(B)$.
    \item $f$ has the \textit{diminishing returns} property, i.e., for all $A \preccurlyeq B \in \bbT$, for all $a \in \mathbb{S} \text{ such that } Aa \text{ and } Ba \in \bbT, f(Aa)-f(A) \geq f(Ba) - f(B)$. 
\end{enumerate}
\end{definition}

\begin{remark}
    The generalized greedy curvature bound based on $\gamma_{G}$ in \cite{conforti1984submodular} relies on assumptions $\mathbf{A_1}, \ldots, \mathbf{A_5}$. The other generalized greedy curvature bound (based on $\gamma_{G}''$) in \cite{conforti1984submodular} relies on all the assumptions from $\mathbf{A_1}$ to $\mathbf{A_7}$. The additional $\mathbf{A_6}$ and $\mathbf{A_7}$ are needed to account for the fact that $\gamma_G''$ requires computations beyond the horizon. However, our bound, relying on only $\mathbf{A_1}$ and $\mathbf{A_2}$, is provably better than both of the generalized greedy curvature bounds.
\end{remark}

~\\
\indent We need the following proposition and lemma for the proofs of later theorems. 

\begin{proposition}\thlabel{prop}
The inequality \[\sum_{i=1}^Kf(o_i) \leq f(g_1) + \gamma \sum_{i=2}^K\Delta(G_i)\] holds when:
\begin{enumerate}
\item $\gamma = \gamma_G$ assuming $\mathbf{A_1, A_3, A_4}$, and $\mathbf{A_5}$.
\item $\gamma = \gamma_G''$ assuming $\mathbf{A_1, A_3, A_4, A_5, A_6,}$ and $\mathbf{A_7}$.
\end{enumerate}

\end{proposition}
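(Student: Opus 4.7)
The plan is to prove both claims by bounding $\sum_{i=1}^K f(o_i)$ term by term. The stage-one term $f(o_1)$ will be absorbed into $f(g_1)$ via the greedy selection rule at the first step, while each subsequent term $f(o_i)$ for $i\ge 2$ will be shown to satisfy $f(o_i)\le\gamma\,\Delta(G_i)$. For the base case, since $G_0=\varnothing$ and $f(\varnothing)=0$ by $\mathbf{A_3}$, the greedy rule reduces to $g_1=\arg\max_{s\in\mathbb{S}(\varnothing)} f(s)$; assumption $\mathbf{A_1}$ gives $o_1\in\mathbb{S}(\varnothing)$, so $f(o_1)\le f(g_1)$ immediately. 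Summing the per-index bounds then yields the claimed inequality.

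For $i\ge 2$, the shared backbone is the greedy optimality at stage $i$: since $o_i\in\mathbb{S}(G_{i-1})$ by $\mathbf{A_1}$, the choice of $g_i$ gives
\[
\Delta(G_i)=f(G_{i-1}g_i)-f(G_{i-1})\ge f(G_{i-1}o_i)-f(G_{i-1})=\Delta(G_{i-1}o_i).
\]
For part (1), assumption $\mathbf{A_5}$ at index $i-1$ ensures $\Delta(G_{i-1}o_i)>0$, so $o_i$ is an admissible argument in the inner maximum defining $\gamma_G$ at level $k=i$, and we obtain $f(o_i)\le\gamma_G\,\Delta(G_{i-1}o_i)\le\gamma_G\,\Delta(G_i)$. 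Summing over $i=2,\dots,K$ and adding the base-case inequality closes part (1); assumption $\mathbf{A_4}$ enters only to guarantee non-negativity of the increments, for coherence with $\mathbf{A_3}$.

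For part (2), the same backbone is reinforced by $\mathbf{A_6}$ at index $i-1$, which contributes $\Delta(G_{i-1}o_i)\ge\Delta(M_{\bbT}(\bbS\setminus o_i)o_i)$, while $\mathbf{A_7}$ keeps the latter strictly positive so that division is legitimate. Provided $o_i\ne g_1$, the definition of $\gamma_G''$ yields $f(o_i)\le\gamma_G''\,\Delta(M_{\bbT}(\bbS\setminus o_i)o_i)\le\gamma_G''\,\Delta(G_i)$, and summing completes the argument. The main obstacle I anticipate is the edge case $o_i=g_1$ for some $i\ge 2$, which in the original Conforti--Cornu\'{e}jols set setting cannot arise but is admissible in the string generalization where symbols may repeat; I expect to handle this by combining $\mathbf{A_4}$ with the stage-one inequality to reroute the contribution of such indices onto the $f(g_1)$ term of the right-hand side, and verifying that this bookkeeping does not double-count will be the most delicate part of the proof.
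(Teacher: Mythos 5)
Your proposal follows essentially the same route as the paper's proof: bound $f(o_1)$ by $f(g_1)$ via the first greedy step, and for each $i \ge 2$ chain $f(o_i) \le \gamma\,\Delta(G_{i-1}o_i) \le \gamma\,\Delta(G_i)$ using $\mathbf{A_1}$, greedy optimality at stage $i$, the definitions of $\gamma_G$ and $\gamma_G''$, and (for part 2) $\mathbf{A_6}$ and $\mathbf{A_7}$. The edge case you flag in part (2) --- $o_i = g_1$ for some $i \ge 2$, which excludes $o_i$ from the index set of the maximum defining $\gamma_G''$ and is possible in the string setting where symbols may repeat --- is a genuine subtlety that the paper's own proof silently passes over, so your extra care there is warranted rather than a deviation.
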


\begin{proof}
By assumptions $\mathbf{A_3}$ and $\mathbf{A_4}$, all quantities involved are nonnegative. By definition of the greedy string, we have $f(g_1) \geq f(o_1)$. 

\noindent 1) The definition of $\gamma_{G}$ and assumption $\mathbf{A_5}$ give us:
    \begin{equation}
    \begin{aligned}
        & \text{For all } \; i\in\{2,\ldots,K\} \text{ and } s\in\mathbb{S}(G_{i-1}), \\ 
        & f(s) \leq \gamma_{G} \Delta(G_{i-1}s).
    \end{aligned}
    \end{equation}
Then by assumption $\mathbf{A_1}$ and the definition of greedy string, we have:
\begin{equation}
    f(o_{i}) \leq \gamma_{G} \Delta(G_{i-1}o_{i}) \leq \gamma_{G}\Delta(G_{i}) \text{ for } i \in \{2,\ldots,K\}.
\end{equation}
Therefore, 
\begin{equation}
    \sum_{i=1}^{K}f(o_i) \leq f(g_1) + \gamma_{G}\sum_{i=2}^{K} \Delta(G_{i}).
\end{equation}
    
\noindent 2) Using the definition of $\gamma_{G}''$ and the additional assumptions $\mathbf{A_6}$ and $\mathbf{A_7}$, we have:
\begin{equation}
\begin{aligned}
    & f(o_i) \leq \gamma_{G}''\Delta^*(o_i) \leq \gamma_{G}''\Delta(G_{i-1}o_{i}) \leq \gamma_{G}''\Delta(G_{i}), \\
    & \text{for } i \in \{2,\ldots,K\}. 
\end{aligned}
\end{equation}
Therefore, 
\begin{equation}
    \sum_{i=1}^{K}f(o_i) \leq f(g_1) + \gamma_{G}''\sum_{i=2}^{K} \Delta(G_{i}).
\end{equation}
\end{proof}


\begin{lemma}\thlabel{Blemma}
     If an upper bound $B$ of $f(O_K)$ satisfies 
    \begin{equation*}
        B \leq f(g_1) + \gamma \sum_{i=2}^{K} \Delta(G_i), 
    \end{equation*}
    
\noindent we have \[ \frac{f(G_K)}{f(O_K)} \geq \frac{f(G_K)}{B} \geq \frac{1}{\gamma} + \left(1 - \frac{1}{\gamma} \right)\frac{f(g_1)}{B}, \]
where $\gamma = \gamma_{G} \text{ or } \gamma_{G}''.$
\end{lemma}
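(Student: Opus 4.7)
The plan is to split the two-sided chain into its two constituent inequalities and dispatch each one by a short algebraic argument. The first inequality, $f(G_K)/f(O_K) \geq f(G_K)/B$, is immediate from the hypothesis that $B$ is an upper bound of $f(O_K)$: since $f(O_K)$ and $B$ are positive (as $B \geq f(O_K) \geq f(g_1) \geq 0$ via $\mathbf{A_3}$, $\mathbf{A_4}$, and the fact that the greedy string starts non-trivially), dividing the positive numerator $f(G_K)$ by the larger denominator $B$ yields the smaller ratio.

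The second inequality is where the content lies. The key identity to deploy is the telescoping decomposition
\[
f(G_K) \;=\; f(g_1) + \sum_{i=2}^{K}\Delta(G_i),
\]
which follows from $\mathbf{A_3}$ and the definition of $\Delta$. Rearranging the hypothesis $B \leq f(g_1) + \gamma \sum_{i=2}^{K}\Delta(G_i)$ gives the lower bound
\[
\sum_{i=2}^{K}\Delta(G_i) \;\geq\; \frac{B - f(g_1)}{\gamma},
\]
and substituting this into the telescoping identity yields $f(G_K) \geq f(g_1) + (B - f(g_1))/\gamma$. Dividing through by $B > 0$ and collecting the $f(g_1)/B$ terms produces exactly the claimed bound $f(G_K)/B \geq 1/\gamma + (1 - 1/\gamma) f(g_1)/B$.

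There is no real obstacle in the argument; the most delicate points are purely bookkeeping. One should verify that $B > 0$ so that division is valid (which holds because $B \geq f(O_K) \geq f(g_1)$ and we are in the nontrivial case $f(G_K) > 0$), and note that the hypothesis $\gamma \geq 1$ is not actually needed for the algebra to go through but is retained so the final expression admits a meaningful interpretation as a convex-combination lower bound with nonnegative coefficient $1 - 1/\gamma$. Since $\mathbf{A_2}$ together with \thref{prop} already forces $\gamma \geq 1$ whenever $\gamma \in \{\gamma_G, \gamma_G''\}$, the stated hypothesis is automatically satisfied in all cases of interest.
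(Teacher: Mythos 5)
Your proof is correct and follows essentially the same route as the paper's: both rest on the telescoping identity $f(G_K)=f(g_1)+\sum_{i=2}^{K}\Delta(G_i)$ combined with the hypothesis on $B$, followed by division by $B$ (the paper merely adds and subtracts $\gamma f(g_1)$ instead of isolating the sum, which is the same algebra). Your side remarks on positivity and on $\gamma>0$ being the only requirement for the algebra are accurate and, if anything, slightly more careful than the paper's own write-up.
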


\begin{proof}
The inequality $B \leq f(g_1) + \gamma \sum_{i=2}^{K} \Delta(G_i)$ can be rewritten as:

\begin{equation}
\label{zerotrick}
\begin{aligned}
    B & \leq (1-\gamma) f(g_1) + \gamma f(g_1) + \gamma \sum_{i=2}^{K}\Delta(G_i) \\
    & = (1-\gamma)f(g_1) + \gamma f(G_K).
\end{aligned}
\end{equation}

Considering $f(O_K) \leq B$ and dividing $\gamma B$ on both sides of \eqref{zerotrick}, we have 

\begin{equation}
\begin{aligned}
      \frac{f(G_K)}{f(O_K)} \geq \frac{f(G_K)}{B} \geq \frac{1}{\gamma} + \left(1-\frac{1}{\gamma} \right) \frac{f(g_1)}{B}.
\end{aligned}
\end{equation}
\end{proof}




~\\
\indent We now generalize the greedy curvature bounds in \cite{conforti1984submodular} to strings.

\begin{theorem}\thlabel{GenConforti}
Assuming that $\mathbf{A_1, \ldots, A_7}$ hold, the greedy curvature bounds in \cite{conforti1984submodular} generalize to the string setting in our notation as 
\[
\frac{f(G_K)}{f(O_K)} \geq \frac{1}{K} + \frac{1}{\gamma}\frac{K-1}{K},
\]
where $\gamma = \gamma_{G} \text{ or } \gamma_{G}''$.

\end{theorem}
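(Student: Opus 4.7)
The plan is to assemble Theorem \ref{GenConforti} from the two already-established tools: Proposition \ref{prop} (which gives the fundamental inequality bounding $\sum_i f(o_i)$) and Lemma \ref{Blemma} (which converts any upper bound on $f(O_K)$ lying between $\sum_i f(o_i)$ and $f(g_1)+\gamma \sum_{i=2}^K \Delta(G_i)$ into a performance ratio).

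First I would choose the explicit upper bound $B \coloneqq \sum_{i=1}^K f(o_i)$. Assumption $\mathbf{A_2}$ gives immediately $f(O_K)\leq B$. Proposition \ref{prop} (which applies under $\mathbf{A_1},\mathbf{A_3},\mathbf{A_4},\mathbf{A_5}$ for $\gamma=\gamma_G$, and additionally $\mathbf{A_6},\mathbf{A_7}$ for $\gamma=\gamma_G''$, all of which are assumed in the theorem) delivers
\[
B \;=\; \sum_{i=1}^K f(o_i) \;\leq\; f(g_1) + \gamma \sum_{i=2}^K \Delta(G_i).
\]
Combined with the remark following the proposition, which asserts $\gamma\geq 1$ (otherwise $f(O_K)<f(G_K)$, contradicting $f(O_K)\geq f(G_K)$), this $B$ satisfies the hypotheses of Lemma \ref{Blemma}, yielding
\[
\frac{f(G_K)}{f(O_K)} \;\geq\; \frac{1}{\gamma} + \Bigl(1-\frac{1}{\gamma}\Bigr)\frac{f(g_1)}{B}.
\]

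The only remaining task is to lower-bound the ratio $f(g_1)/B$. By definition of the greedy solution, $g_1 = \arg\max_{s\in\bbS(\varnothing)} f(s)$, and assumption $\mathbf{A_1}$ places every $o_i$ in $\bbS(\varnothing)$. Hence $f(g_1) \geq f(o_i)$ for each $i\in\{1,\dots,K\}$, so summing gives $K f(g_1) \geq \sum_{i=1}^K f(o_i) = B$, i.e., $f(g_1)/B \geq 1/K$. Substituting this into the inequality above and using $1-1/\gamma \geq 0$ yields
\[
\frac{f(G_K)}{f(O_K)} \;\geq\; \frac{1}{\gamma} + \Bigl(1-\frac{1}{\gamma}\Bigr)\frac{1}{K} \;=\; \frac{1}{K} + \frac{1}{\gamma}\,\frac{K-1}{K},
\]
which is the claimed bound.

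I don't anticipate a genuine obstacle: the proposition does the heavy lifting (the ``fundamental inequality''), the lemma packages it into a performance ratio, and the only new ingredient is the elementary observation that $f(g_1)$ dominates each $f(o_i)$ by greediness plus $\mathbf{A_1}$. The one spot that merits care is verifying that $\gamma \geq 1$ so that the coefficient $(1-1/\gamma)$ is nonnegative when we replace $f(g_1)/B$ by its lower bound $1/K$; this is exactly the content of the remark after Proposition \ref{prop} and depends on $\mathbf{A_2}$, which is also assumed.
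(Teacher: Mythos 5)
Your proposal is correct and follows essentially the same route as the paper's own proof: take $B=\sum_{i=1}^K f(o_i)$, invoke \thref{prop} and \thref{Blemma}, and then use $B \leq Kf(g_1)$ (from greediness and $\mathbf{A_1}$) together with $\gamma \geq 1$ to replace $f(g_1)/B$ by $1/K$. No substantive differences.
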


\begin{proof}
Assumption $\mathbf{A}_1$ and the definition of the greedy string imply that $f(g_1) \geq f(o_i)$ for all $i \in \{1, \ldots, K\}$. Combining this fact with assumption $\mathbf{A}_2$ we obtain: 
\begin{equation}
    f(O_K) \leq \sum_{i=1}^{K}f(o_i) \leq Kf(g_1). 
\end{equation}

Let $B = \sum_{i=1}^{K}f(o_i)$. Then, by \thref{prop} and \thref{Blemma}, we know: 
\begin{equation}
        \begin{aligned}
             \frac{f(G_K)}{f(O_K)} &\geq \frac{f(G_K)}{B} \geq \frac{1}{\gamma} + \left(1 - \frac{1}{\gamma} \right)\frac{f(g_1)}{B} \\
             & \stackrel{(a)}{\geq} \frac{1}{\gamma} + \left(1 - \frac{1}{\gamma} \right)\frac{f(g_1)}{Kf(g_1)} \\
             & = \frac{1}{K} + \frac{1}{\gamma}\frac{K-1}{K}, 
        \end{aligned}
    \end{equation}
where inequality (a) is because $\gamma > 1$ and $B \leq Kf(g_1)$.
\end{proof}

~\\
\indent We define our upper bound for $f(O_K)$ as
\[
B_s = \sum_{i=1}^Kc_i, \text{ where } c_i = \max_{s \in \mathbb{S}(G_{i-1})\cap \bbS(\varnothing)} f(s).
\]

We now prove that $B_s$ is an upper bound for $f(O_K)$.
\begin{theorem}
\thlabel{lemmatopk}
Assuming $\mathbf{A_1}$ and $\mathbf{A_2}$, we have that $f(O_K) \leq B_s$.
\end{theorem}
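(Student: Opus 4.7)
The proof looks to be an almost immediate consequence of the two assumptions, so my plan is quite short. The plan is to exhibit each symbol $o_i$ of the optimal string as a feasible candidate in the maximization that defines $c_i$, and then chain this together with the subadditivity assumption $\mathbf{A_2}$.

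First I would fix an arbitrary $i \in \{1,\ldots,K\}$ and invoke $\mathbf{A_1}$, which states that $o_i \in \mathbb{S}(G_{i-1}) \cap \bbS(\varnothing)$. This is exactly the feasible set over which $c_i = \max_{s \in \mathbb{S}(G_{i-1})\cap \bbS(\varnothing)} f(s)$ is taken, so $o_i$ is an admissible choice and therefore $f(o_i) \leq c_i$. Summing this inequality over $i \in \{1,\ldots,K\}$ gives
\[
\sum_{i=1}^K f(o_i) \leq \sum_{i=1}^K c_i = B_s.
\]
Finally, I would apply $\mathbf{A_2}$ on the left-hand side to conclude
\[
f(O_K) \leq \sum_{i=1}^K f(o_i) \leq B_s,
\]
which is the claim.

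There is no real obstacle here: the definitions of $B_s$ and $c_i$ are arranged precisely so that $\mathbf{A_1}$ (feasibility of each $o_i$) and $\mathbf{A_2}$ (subadditivity along the optimal string) combine directly. The only subtlety worth double-checking is that the maximum defining $c_i$ is taken over a nonempty set, which is guaranteed for each $i$ by $\mathbf{A_1}$ since $o_i$ itself belongs to $\mathbb{S}(G_{i-1}) \cap \bbS(\varnothing)$.
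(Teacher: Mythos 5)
Your proposal is correct and follows essentially the same two-step chain as the paper's own proof: apply $\mathbf{A_2}$ to get $f(O_K) \leq \sum_{i=1}^K f(o_i)$, then use $\mathbf{A_1}$ to see that each $o_i$ is a feasible candidate in the maximization defining $c_i$, so $f(o_i) \leq c_i$ and the sum is at most $B_s$. Your added observation that $\mathbf{A_1}$ also guarantees the feasible set for each $c_i$ is nonempty is a nice touch the paper leaves implicit.
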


\begin{proof}
\begin{equation}
    \begin{aligned}
        f(O_{K}) \stackrel{(a)}{\leq} \sum_{i=1}^{K}f(o_{i}) \stackrel{(b)}{\leq} \sum_{i=1}^{K}  c_{i} = B_{s}.
    \end{aligned}
\end{equation}
Inequality $(a)$ follows from $\mathbf{A_{2}}$ while inequality $(b)$ holds because of $\mathbf{A_1}$ and the definition of $c_{i}$.
\end{proof}

\begin{remark}
\thlabel{applicationuse}
Given some additional information about the objective function, it is possible that we do not need the requirement that $o_i \in \bbS(G_{i-1})$ in assumption $\mathbf{A_1}$ for $B_s$ to be an upper bound for $f(O_K)$. For example, if we know that all the elements of the ground set are feasible at epoch one and each element can be chosen only once throughout all the epochs, then we can take $B_s$ to be the sum of the $K$ largest objective-function values at epoch one. This is the case for one of the string problems in Section \rom{6}. 
\end{remark}

~\\
\indent The following theorem shows that our bound $B_{s}$, relying on fewer assumptions, is better than both the $\gamma_G$ and $\gamma_G''$ greedy curvature bounds in \cite{conforti1984submodular}. 

\begin{theorem}\thlabel{superioritytheorem}
Assuming $\mathbf{A_1, \ldots, A_7}$, the $B_s$ performance bound is superior to the generalized greedy curvature bounds in \cite{conforti1984submodular}, i.e., \[\frac{f(G_K)}{f(O_K)} \geq \frac{f(G_K)}{B_s} \geq \frac{1}{K} + \frac{1}{\gamma}\frac{K-1}{K}, \]
where $\gamma = \gamma_{G} \text{ or } \gamma_{G}''$.
\end{theorem}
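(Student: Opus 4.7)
The plan is to decompose the bound into two halves, use \thref{lemmatopk} for the first, and verify that $B_s$ satisfies the hypothesis of \thref{Blemma} for the second, with one extra observation that $B_s \leq Kf(g_1)$.

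First, the inequality $f(G_K)/f(O_K) \geq f(G_K)/B_s$ is immediate from \thref{lemmatopk}, which uses only $\mathbf{A_1}$ and $\mathbf{A_2}$. So the substantive work is to derive the lower bound on $f(G_K)/B_s$.

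Next, I would show that $B_s$ itself satisfies the fundamental inequality required by \thref{Blemma}, namely
\[
B_s \;\leq\; f(g_1) + \gamma \sum_{i=2}^{K}\Delta(G_i), \qquad \gamma \in\{\gamma_G,\gamma_G''\}.
\]
Split $B_s = c_1 + \sum_{i=2}^K c_i$. For $i=1$, since $G_0 = \varnothing$ and by the greedy rule $g_1$ maximizes $f$ over $\mathbb{S}(\varnothing)$, we get $c_1 = f(g_1)$. For each $i\geq 2$, let $s_i^\star$ attain the max in the definition of $c_i$; since $s_i^\star \in \mathbb{S}(G_{i-1})\cap\mathbb{S}(\varnothing)$, an argument mirroring the two chains in the proof of \thref{prop} gives
\[
c_i = f(s_i^\star) \leq \gamma\,\Delta(G_{i-1}s_i^\star) \leq \gamma\,\Delta(G_i),
\]
where for $\gamma=\gamma_G$ the first inequality uses the definition of $\gamma_G$ together with $\mathbf{A_5}$, while for $\gamma=\gamma_G''$ it uses the chain $f(s_i^\star) \leq \gamma_G''\,\Delta(M_\bbT(\bbS\setminus s_i^\star)s_i^\star) \leq \gamma_G''\,\Delta(G_{i-1}s_i^\star)$ via $\mathbf{A_6}$ and $\mathbf{A_7}$, and the second inequality is the greedy choice of $g_i$. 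Summing over $i$ yields the required inequality, and the Remark after \thref{prop} guarantees $\gamma \geq 1$, so \thref{Blemma} applies and gives
\[
\frac{f(G_K)}{B_s} \;\geq\; \frac{1}{\gamma} + \left(1-\frac{1}{\gamma}\right)\frac{f(g_1)}{B_s}.
\]

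Finally, I would observe that $\mathbb{S}(G_{i-1})\cap\mathbb{S}(\varnothing) \subseteq \mathbb{S}(\varnothing)$ for every $i$, which forces $c_i \leq c_1 = f(g_1)$ and therefore $B_s \leq K f(g_1)$, i.e. $f(g_1)/B_s \geq 1/K$. Because $1 - 1/\gamma \geq 0$, plugging this into the previous display gives
\[
\frac{f(G_K)}{B_s} \;\geq\; \frac{1}{\gamma} + \left(1-\frac{1}{\gamma}\right)\frac{1}{K} \;=\; \frac{1}{K} + \frac{1}{\gamma}\cdot\frac{K-1}{K},
\]
which completes the proof.

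The main technical step is the case-by-case bounding $c_i \leq \gamma\,\Delta(G_i)$ for the two choices of $\gamma$; this is essentially a re-run of \thref{prop} but with $s_i^\star$ replacing $o_i$, and it is the place where the full assumption set $\mathbf{A_1}$--$\mathbf{A_7}$ for $\gamma_G''$ (versus only $\mathbf{A_1},\mathbf{A_3},\mathbf{A_4},\mathbf{A_5}$ for $\gamma_G$) enters. Everything else is algebraic manipulation already packaged inside \thref{Blemma}.
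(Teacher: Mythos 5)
Your proposal is correct and follows essentially the same route as the paper's proof: decompose $B_s = f(g_1) + \sum_{i=2}^K c_i$, bound each maximizer $s_i^\star$ via the definition of $\gamma_G$ (with $\mathbf{A_5}$) or the chain through $\Delta(M_{\bbT}(\bbS\setminus s_i^\star)s_i^\star)$ (with $\mathbf{A_6}$, $\mathbf{A_7}$) and the greedy choice of $g_i$, then apply \thref{Blemma} together with $B_s \leq Kf(g_1)$. The only difference is presentational — you spell out the two $\gamma$ cases and the source of $\gamma \geq 1$ slightly more explicitly than the paper does.
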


\begin{proof}
    By the definition of $g_{1}$, $B_s = \sum_{i=1}^{K} c_{i} \leq Kf(g_1)$. Let $s_{i}$ denote the element that achieves the maximum $c_{i}$ for each $i \in \{1,\ldots,K\}$. The definition of $c_{1}$ gives us $s_{1} = g_{1}$. Then, by assumptions $\mathbf{A_5}, \mathbf{A_6}, \mathbf{A_7}$ and the definitions of $\gamma_{G}$ and $\gamma_{G}''$, we have
    \begin{equation}
    \label{suffice}
    \begin{aligned}
        B_{s} & = f(g_{1}) + \sum_{i=2}^{K} f(s_{i}) 
        \leq f(g_{1}) + \gamma\sum_{i=2}^{K}\Delta(G_{i-1}s_{i}) \\
        & \leq f(g_{1}) + \gamma \sum_{i=2}^{K}\Delta(G_{i}),
    \end{aligned}
    \end{equation}
    where $\gamma = \gamma_{G} \text{ or } \gamma_{G}''$.
    
    Finally, by \thref{Blemma}, \thref{GenConforti}, and \thref{lemmatopk}, we obtain
    \begin{equation}
    \begin{aligned}
        \frac{f(G_K)}{f(O_K)} & \geq \frac{f(G_K)}{B_s}  \geq \frac{1}{\gamma} + \left(1-\frac{1}{\gamma} \right) \frac{f(g_1)}{B_s} \\
        & \geq \frac{1}{\gamma} + \left(1-\frac{1}{\gamma} \right) \frac{f(g_1)}{Kf(g_1)} \\
        & = \frac{1}{K} + \frac{1}{\gamma}\frac{K-1}{K}.
    \end{aligned}
    \end{equation}
\end{proof}



\subsection{A Counterexample}
In this section, we focus on the case of monotone submodular set functions with matroid domains. We keep our notation the same but make the following two assumptions as doing so reduces us to the set case:

\begin{enumerate}
    \item Our function $f$ does not change value if the symbols of a string are permuted.
    \item In each string in $\bbT$, each symbol can be used at most once.
\end{enumerate} 
After proving their greedy curvature bound for $\alpha_G$, Conforti and Cornu\'{e}jols \cite{conforti1984submodular} introduce two other constants for which they claim ``Theorem 3.1 remains true if $\alpha_G$ is replaced by the parameters \[ \alpha_G' = \max_{j \in \bbS \setminus{G_{K-1}}, f(j) > 0} \left\{ \frac{f(j)-\Delta(G_{K-1}j)}{f(j)}   \right\}\] or 

\[ \alpha_G'' = \max_{j \in \bbS \setminus{g_1}, f(j) > 0} \left\{ \frac{f(j)-\Delta((\bbS \setminus{j})j)}{f(j)}   \right\}."\]
While this claim is true for $\alpha_G''$, as we showed in \thref{GenConforti} above, we now show that the method of proof of the $\alpha_G'$ bound is incorrect. The crux of the argument in \cite{conforti1984submodular} is the following sequence of inequalities:

\[ f(O_K) \stackrel{(a)}{\leq} \sum_{i=1}^K f(o_i) \stackrel{(b)}{\leq} f(g_1) +  \frac{1}{1-\alpha_G'}\sum_{k=2}^K \Delta(G_{k}). \]

The key to our counterexample is getting inequality $(b)$ to fail for $\alpha_G'$, i.e., an example wherein:
\begin{equation} \label{eq1}
    \sum_{k=1}^Kf(o_k) > f(g_1) + \frac{1}{1-\alpha_G'}\sum_{k=2}^K \Delta(G_{k}).
\end{equation}

If this inequality holds, then their proof method fails. Our context is the following:
\begin{itemize}
    \item The ground set of the matroid is $\bbS = \{w, x, y, z\}$.
    \item Our matroid is the set $\bbT = \{S \in \bbS^*: |S| \leq 3\}$.
    \item The function $f: \bbT \to \mathbb{R}_{\geq 0}$ is monotone submodular and $f(\varnothing) = 0$.
    \item The horizon length is $K = 3$.
    \item In our example, $G_1 = O_1 = y$, $G_2 = O_2 = xy$, and $G_3 = O_3 = xyz$.
\end{itemize}

Before defining $f$ on all of $\bbT$, we first focus on the key parts of our counterexample wherein the definition of $f$ satisfies the above inequality. We assume that the optimal and the greedy solutions are both equal to $xyz$. We define \[ f(w) = \frac{1}{3}, \quad f(x) = 9, \quad f(y) = 10, \quad f(z) = \frac{1}{2}\] and \[ \Delta((xy)w) = \frac{1}{6}, \quad \Delta((xy)z) = \frac{1}{3}.\] Here, since $G_2 = xy$, we see that $\bbS \setminus{G_2} = \{w, z\}$. Now note that \begin{equation} \frac{1}{1-\alpha_G'} = \max_{j \in \bbS\setminus{G_2}} \left \{ \frac{f(j)}{\Delta((xy)j)}\right \}, \end{equation} which after computing the quantity inside of the brackets of equation $(6)$ for $j = z$ we get $3/2$, and for $j=w$ we obtain $2$. Therefore, plugging in our values into the inequality, we get: \begin{equation}
    \begin{split}
        \sum_{k=1}^Kf(o_k) & = 10 + 9 + \frac{1}{2} \\
        & > 10 + 2(1 + \frac{1}{3}) \\
        & = f(g_1) + \frac{1}{1-\alpha_G'}\sum_{k=2}^K \Delta(G_{k}).
    \end{split}
\end{equation}

We see from the computations in $(7)$ that the inequality from \thref{prop} does not hold. We now define $f$ on the power set of $\bbS$ in Table \ref{tab1}.

\begin{table}[H]
\centering
\caption{$f(S)$ for $S \in \bbT$}
\label{table}
\setlength{\tabcolsep}{3pt}
\begin{tabular}{|p{50pt}|p{50pt}|}
\hline
$|S| = 0$&
$f(S)$\\
\hline
$\varnothing$&
$0$\\
\hline
$|S| =1$&
$f(S)$\\
\hline
$w$&
$\frac{1}{3}$\\

$x$&
$9$\\

$y$&
$10$\\

$z$&
$\frac{1}{2}$\\

\hline
$|S| =2$&
$f(S)$\\
\hline
$wx$&
$9 + \frac{1}{3}$\\

$wy$&
$10 + \frac{1}{3}$\\

$wz$&
$\frac{1}{3} + \frac{1}{2}$\\

$xy$&
$9+2$\\

$xz$&
$9 + \frac{1}{2}$\\

$yz$&
$10 + \frac{1}{2}$\\

\hline
$|S| =3$&
$f(S)$\\
\hline
$wxy$&
$9 + \frac{1}{3} + \frac{11}{6}$\\

$wxz$&
$9 + \frac{1}{3} + \frac{1}{2}$\\

$wyz$&
$\frac{1}{3} + 10 + \frac{1}{2}$\\

$xyz$&
$9 + \frac{1}{2} + \frac{11}{6}$\\
\hline
$|S| = 4$&
$f(S)$ \\
\hline
$wxyz$&
$11 + \frac{1}{3}$\\
\hline
\end{tabular}
\label{tab1}
\end{table}

\noindent Checking that the function is monotone 
 submodular is not difficult, and with the example above we have established that the $\alpha_G'$ bound in \cite{conforti1984submodular} has an incorrect proof. We note that whether the bound itself holds or not is unclear, as the inequality  $11 + 1/3 = f(O_K) \leq f(g_1) + 1/(1-\alpha_G')\sum_{k=2}^K \Delta(G_{k}) = 12 + 2/3$ still  still holds for our particular example. The counterexample shows that the proof of this result is invalid, and calls into question the correctness of the theorem. Remark 3.4 of \cite{conforti1984submodular} claims that $\alpha_G'$ is also valid to use in the formula for the bound in Theorem 3.1 of \cite{conforti1984submodular}, and since $\alpha_G'$ is computable, a bound involving its computation is practically appealing. However, until we provided this counterexample, it was unknown that the theorem could be wrong.


\section{APPLICATIONS}
Our applications in this section aim to demonstrate the superiority of our result to those of \cite{conforti1984submodular} in two different scenarios, and to apply our results in two more scenarios in which, to the best of our knowledge, is the first of its kind. We begin by showing the strength of our result in the classical case of monotone submodular set functions on matroids via a sensor coverage problem. We then modify the sensor coverage problem by adding a time-varying component that results in a string optimization problem with a monotone submodular string objective function. We conclude with a social welfare maximization problem with black-box utility functions that are neither submodular nor subadditive. We again add a time-varying component to the utility functions of the agents to produce a string optimization problem.

\subsection{Multiagent Sensor Coverage}
The multiagent sensor coverage problem was originally studied in \cite{zhong2011distributed} and further analyzed in \cite{sun2019exploiting} and \cite{welikala2022new}. In a given mission space, we seek to find a placement of a set of sensors to maximize the performance of detecting randomly occurring events. We apply our results to a discrete version of this problem where the placement of the sensors is constrained to lattice points, and the occurrence of random events is possible at any point within the mission space. The settings in this application are more generalized than the same problem we studied in \cite{van2023improved} and \cite{li2024bounds}.

The mission space $\Omega \subset \mathbb{R}^{2}$ is modeled as a non-self-intersecting polygon where a total of $K$ sensors will be placed to detect a randomly occurring event. Those lattice points feasible for sensor placement within the mission space are denoted by $\Omega^{F} \subset \mathbb{R}^{2}$. The placement of these sensors occurs over $K$ epochs, wherein one sensor is placed during each epoch. Our goal is to maximize the overall detection performance in the mission space, as illustrated in Fig.~\ref{sensors1}. 

\begin{figure}[hbt!]
    \centering
    \includegraphics[width=0.98\columnwidth]{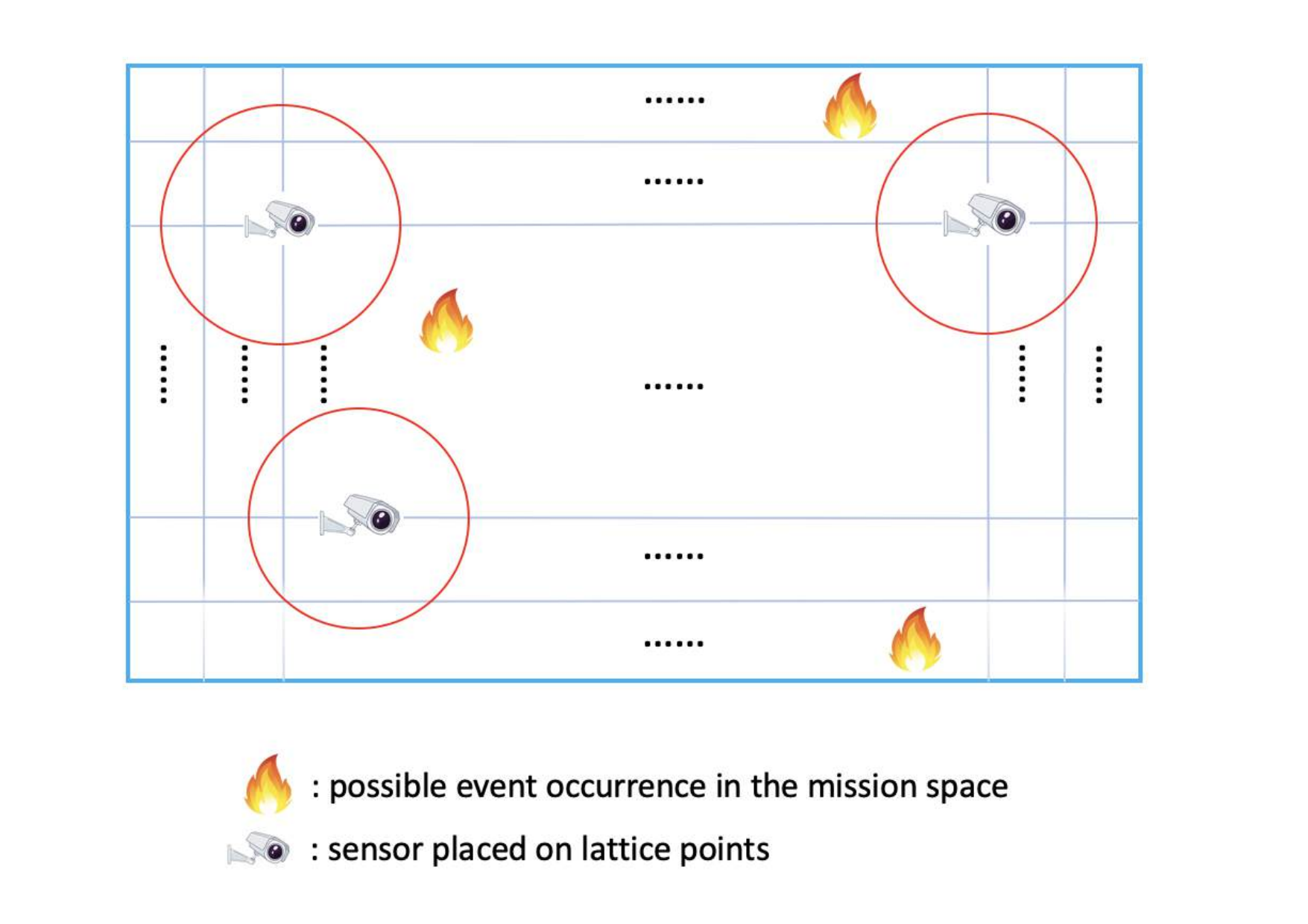}
    \caption{Sensor coverage for event detection in a mission space.} 
    \label{sensors1}
\end{figure}

 The location of a sensor placed during epoch $i$ is denoted by $\mathbf{s}_{i}$, and it can detect any occurring event at location $\mathbf{x} \in \Omega$ with probability $p(\mathbf{x},\mathbf{s}_{i}) = \exp{(-\lambda_{i} \|\mathbf{x} - \mathbf{s}_{i}\|)}$. We assume that $\lambda_{i} > 0$, where $\lambda_i$ is the coverage decay rate of that sensor at epoch $i$ for $i\in \{1, \ldots, K\}$. Once all $K$ sensors have been placed, we represent the choice of these sensors with the string $S = (\mathbf{s}_1,\mathbf{s}_2,\ldots, \mathbf{s}_K) \in \left(\Omega^{F}\right)^K$. The set and string case are distinguished by the time-varying properties of the sensors, which we will refer to as homogeneous and nonhomogeneous sensors, respectively.  \\

\noindent \textbf{Homogeneous Sensors:}
We assume that all the sensors are working independently, and their decay rates are all equal and remain fixed over epochs $1, \ldots, K$, i.e., $\lambda_{1} = \cdots = \lambda_{K} = \lambda$. Thus, we can determine that the probability of detecting a randomly occurring event is just one minus the product of the probabilities that each sensor fails to detect that event. More specifically, the probability of detecting an occurring event at location $\mathbf{x} \in \Omega$ after placing $K$ homogeneous sensors at locations $\mathbf{s}_i$ for $i \in \{1, \ldots, K\}$ is given by
\begin{equation}
    P(\mathbf{x}, S) = 1-\prod_{i=1}^{K}\left( 1-p(\mathbf{x},\mathbf{s}_i) \right) = 1-\prod_{i=1}^{K}\left( 1- e^{-\lambda \|\mathbf{x}-\mathbf{s}_i\|} \right). 
\end{equation}

To calculate the detection performance over the entire mission space $\Omega$, we incorporate the event density function $R$. The event occurrence over $\Omega$ is characterized by an event density function $R: \Omega \xrightarrow{} \mathbb{R}_{\geq 0}$, and we assume that $\int_{\mathbf{x} \in \Omega} R(\mathbf{x}) d\mathbf{x} < \infty$. Our objective function then becomes $H(S) = \int_{\mathbf{x} \in \Omega} R(\mathbf{x})P(\mathbf{x},S) d\mathbf{x}$, and we have the following string optimization problem:
\begin{equation}
\begin{aligned}
\label{obj_fun_sensor}
    & \text{maximize } H(S) \\
    & \text{subject to } S \in \left(\Omega^{F}\right)^K.
\end{aligned}
\end{equation}

Note that the domain is a set of strings whose values under the function $H$ are permutation invariant, and that each sensor location can only be used once. If a total of $N$ lattice points in $\Omega^{F}$ are feasible for sensor placement, then there are $\binom{N}{K}$ possible placements. Exhaustive search becomes computationally intractable when $N$ is large, and therefore we apply the greedy algorithm to obtain an approximate solution in polynomial time. Using \thref{stringsubmodlemma} below, we can conclude that $H$ is monotone submodular in the case where all decay rates are equal. 
Submodularity guarantees that assumption $\mathbf{A_2}$ holds. Checking that the domain of $H$ is a uniform matroid of rank $K$ is straightforward, and by Lemma 2.2 of \cite{conforti1984submodular}, we see that assumption $\mathbf{A_1}$ also holds. Therefore, we can apply our performance bound to this example.  \\

\noindent \textbf{Nonhomogeneous Sensors:} We assume that all the sensors are working independently, but now the decay rate can change depending on the epoch in which the sensor is placed. The sensing capability of all unused sensors will become worse with each passing epoch, and once the sensors are placed, their decay rates will remain constant from their placement time onward. Specifically, we model the decay rate variation as $\lambda_{i} = \lambda_{1} + \zeta t_{i}$ for $i \in \{1, \ldots, K\}$, where $\{ t_{i} \}_{1\leq i \leq K}$ is a monotonically increasing time index sequence beginning at $0$, and $\zeta$ is the parameter controlling the increase in the decay rate. We can thus calculate the probability of detecting an occurring event at location $\mathbf{x} \in \Omega$ after placing $K$ nonhomogeneous sensors at locations $S$ using the formula 

\begin{equation}
    \begin{aligned}
    P(\mathbf{x},S) & = 1-\prod_{i=1}^{K}\left( 1-p_i(\mathbf{x},\mathbf{s}_i) \right) \\
    & = 1-\prod_{i=1}^{K}\left( 1- e^{-\lambda_{i} \|\mathbf{x}-\mathbf{s}_i\|} \right).
    \end{aligned}
\end{equation}

\noindent In the same manner as the previous homogeneous case, we can now phrase our optimization problem as
\begin{equation}
\begin{aligned}
\label{obj_fun_sensor_string}
    & \text{maximize } H(S) \\
    & \text{subject to } S \in \left(\Omega^{F}\right)^K.
\end{aligned}
\end{equation}

Here, the objective function has the same form as that of the previous example. Each sensor location can be used more than once so that assumption $\mathbf{A_1}$ is satisfied. However, our $H$ here is no longer permutation invariant because choosing the same location to place a sensor at different epochs results in different detection performance. Besides, the following lemma states that $H$ is monotone string submodular when the sensing capability of all unused sensors becomes worse, and hence assumption $\mathbf{A_2}$ is satisfied. 

\begin{lemma}\thlabel{stringsubmodlemma}
    If $\{ \lambda_{i} \}_{1\leq i \leq K}$ is a nondecreasing sequence, then $H$ is a monotone string submodular function . 
\end{lemma}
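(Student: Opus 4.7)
The plan is to verify the two defining properties of string submodularity (forward monotone and diminishing returns) by working pointwise inside the integral that defines $H$, then integrating against the nonnegative density $R$. To set up clean notation, for any string $S = \mathbf{s}_1\cdots \mathbf{s}_m$ I would introduce the ``miss probability'' $Q_S(\mathbf{x}) = \prod_{i=1}^{m}(1-e^{-\lambda_i\|\mathbf{x}-\mathbf{s}_i\|})$, so that $P(\mathbf{x},S) = 1 - Q_S(\mathbf{x})$ and $H(S) = \int_{\Omega} R(\mathbf{x})(1-Q_S(\mathbf{x}))\,d\mathbf{x}$. Each factor $(1-e^{-\lambda_i\|\mathbf{x}-\mathbf{s}_i\|})$ lies in $[0,1]$, which is the basic fact I will lean on repeatedly.

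For forward monotonicity, if $A \preccurlyeq B$ then $Q_B(\mathbf{x})$ is obtained from $Q_A(\mathbf{x})$ by multiplying by additional factors in $[0,1]$, hence $Q_B(\mathbf{x}) \leq Q_A(\mathbf{x})$ pointwise, giving $P(\mathbf{x},B) \geq P(\mathbf{x},A)$, and integrating against $R \geq 0$ yields $H(B) \geq H(A)$.

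For diminishing returns, fix $A \preccurlyeq B$ with $|A| = m \leq n = |B|$ and any $a \in \bbS$ such that $Aa,Ba \in \bbT$. A direct computation gives the pointwise identity
\begin{equation*}
P(\mathbf{x},Aa) - P(\mathbf{x},A) = Q_A(\mathbf{x})\,e^{-\lambda_{m+1}\|\mathbf{x}-\mathbf{a}\|},
\end{equation*}
and analogously $P(\mathbf{x},Ba) - P(\mathbf{x},B) = Q_B(\mathbf{x})\,e^{-\lambda_{n+1}\|\mathbf{x}-\mathbf{a}\|}$. I now invoke the hypothesis on $\{\lambda_i\}$: since $n+1 \geq m+1$ and the sequence is nondecreasing, $\lambda_{n+1} \geq \lambda_{m+1}$, which makes $e^{-\lambda_{m+1}\|\mathbf{x}-\mathbf{a}\|} \geq e^{-\lambda_{n+1}\|\mathbf{x}-\mathbf{a}\|}$. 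Combined with $Q_A(\mathbf{x}) \geq Q_B(\mathbf{x})$ from the monotonicity step, both factors on the $A$-side dominate the corresponding factors on the $B$-side, so
\begin{equation*}
P(\mathbf{x},Aa) - P(\mathbf{x},A) \;\geq\; P(\mathbf{x},Ba) - P(\mathbf{x},B)
\end{equation*}
pointwise in $\mathbf{x}$. Multiplying by $R(\mathbf{x}) \geq 0$ and integrating over $\Omega$ yields $H(Aa) - H(A) \geq H(Ba) - H(B)$, which is diminishing returns.

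The only subtlety, and the main place where the nondecreasing hypothesis is actually used, is the exponent comparison in the second step; the monotonicity step would hold for any positive $\lambda_i$. I would flag that the same argument also covers the homogeneous case ($\lambda_1 = \cdots = \lambda_K$) as a special instance, which is why the same lemma is cited to justify submodularity for the set-valued problem in the previous subsection.
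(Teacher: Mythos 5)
Your proof is correct and follows essentially the same route as the paper's: your miss probability $Q_S(\mathbf{x})$ is exactly the paper's $\Gamma(1,|S|,\mathbf{x})$, and both arguments rest on the same pointwise increment identity $P(\mathbf{x},Aa)-P(\mathbf{x},A)=Q_A(\mathbf{x})e^{-\lambda_{m+1}\|\mathbf{x}-\mathbf{a}\|}$ together with the two comparisons $Q_A \geq Q_B$ and $\lambda_{m+1}\leq\lambda_{n+1}$. No substantive differences to report.
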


\begin{proof}
    Let $M$ and $Q$ denote two strings with $M \preccurlyeq Q$. Then $M$ and $Q$ can be written as: 
    $$
    \begin{aligned}
     M  = \mathbf{s}_{1}\cdots \mathbf{s}_{m}, \;
     P  = \mathbf{s}_{1}\cdots \mathbf{s}_{m} \mathbf{s}_{m+1}\cdots \mathbf{s}_{q}
    \end{aligned}
    $$
    with $1 \leq m < q \leq K-1$.
    
    Define $\Gamma(r,s,\mathbf{x}) = \prod_{i=r}^{s} ( 1- e^{-\lambda_{i} \|\mathbf{x}-\mathbf{s}_i\|} )$. Then 
    $0 < \Gamma (r,s,\mathbf{x}) \leq 1$ according to the above assumptions, and 
    $$
    \begin{aligned}
        H(P) - H(Q) & = \int R(\mathbf{x})\Gamma(1,m,\mathbf{x})(1-\Gamma(m+1,q,\mathbf{x})) d\mathbf{x} \\
        & \geq 0. 
    \end{aligned}
    $$
    
    \noindent Hence, $H(P) \geq H(Q)$. 

    Assuming $\mathbf{s}_{k} \in \Omega^{F}$,
    $$
    \begin{aligned}
    & H(M\mathbf{s}_{k}) - H(M) = \int R(\mathbf{x})\Gamma(1,m,\mathbf{x}) e^{-\lambda_{m+1} \|\mathbf{x}-\mathbf{s}_k\|} d\mathbf{x} \\
    & \stackrel{(a)}{\geq} \int R(\mathbf{x})\Gamma(1,q,\mathbf{x})e^{-\lambda_{q+1} \|\mathbf{x}-\mathbf{s}_k\|} d\mathbf{x} = H(P\mathbf{s}_{k}) - H(P). 
    \end{aligned}
    $$
    Inequality $(a)$ holds because $\Gamma(1,m,\mathbf{x}) \geq \Gamma(1,q,\mathbf{x})$ and $ \lambda_{m+1} \leq \lambda_{q+1}$. 

    Therefore, $H$ is a monotone string submodular function since we have shown that it satisfies the forward monotone and diminishing return properties. 
\end{proof} 

~\\
\noindent \textbf{Experiment:} We run the following experiments for both the homogeneous and nonhomogeneous cases. In our experiments, we consider a rectangular mission space $\Omega \subset \mathbb{R}^2$ of size $50 \times 40$, wherein we will  place $K$ sensors. The whole mission space is partitioned into four identical regions as shown in Fig.~\ref{sensors2}. The top right and bottom left regions have high event densities whose $R(\mathbf{x})$ for each $\mathbf{x}$ is randomly generated from $\text{Unif}(0.5,0.8)$. The other two regions have low event densities whose $R(\mathbf{x})$ for each $\mathbf{x}$ is randomly generated from $\text{Unif}(0.1,0.3)$. The above setting implies that the random events are more likely to occur in the top right and bottom left regions. For the nonhomogeneous case, we set $\zeta = 0.1$ and $t_{i} = 0.1(i-1)$ for $i \in \{1, \ldots, K\}$.

\begin{figure}[hbt!]
    \centering
    \includegraphics[width=0.9\columnwidth]{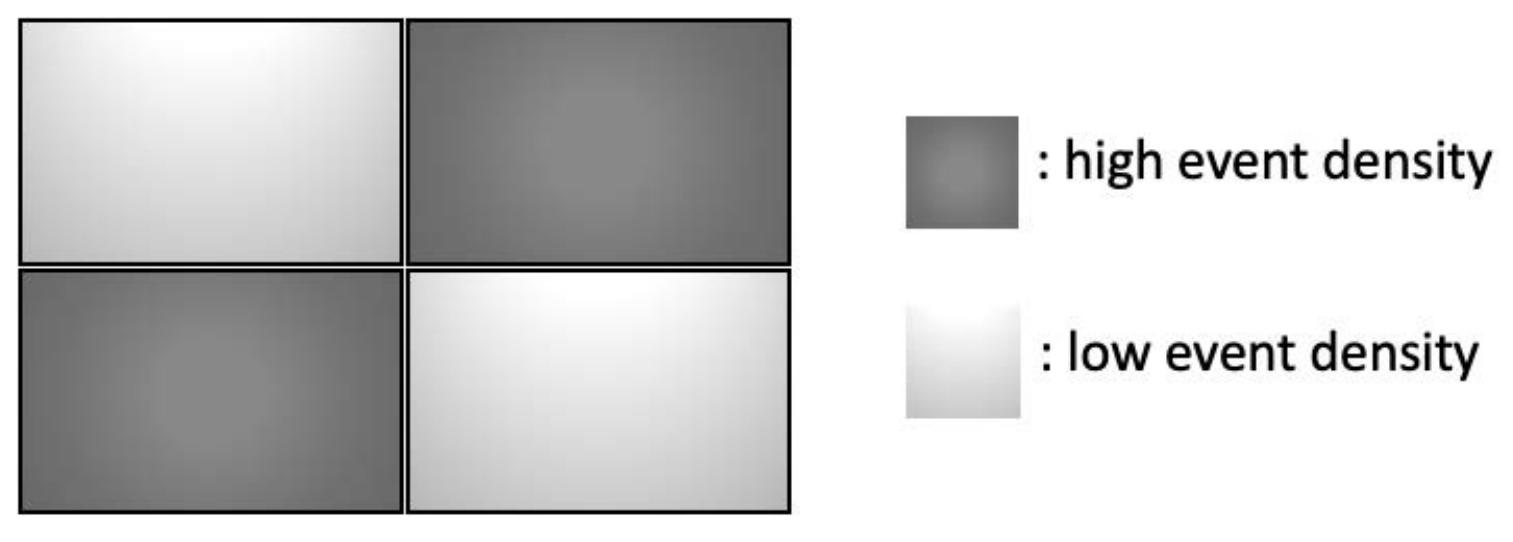}
    \caption{Mission space partition.} 
    \label{sensors2}
\end{figure}

We denote the $(1 - e^{-1})$ performance bound in \cite{nemhauser1978analysis} and \cite{streeter2008online} by $\beta_{0}$, the greedy curvature bound in \cite{conforti1984submodular} and its generalized version by $\beta_{1}$, and our performance bound by $\beta_{2} = f(G_K)/B_s$. A comparison of these performance bounds under different (initial) decay rates with $K=5$ is shown in Fig.~\ref{sensors_bound_disc}. Observe that in Fig.~\ref{sensors_bound_disc}, $\beta_{2}$ (red line) always exceeds $\beta_{1}$ (blue line), illustrating \thref{superioritytheorem}. In both cases we can observe instances where the $\beta_{2}$ is larger than $\beta_{0}
$, while $\beta_1$ is below $\beta_{0}$. When the (initial) decay rate is larger than 1, $\beta_{2}$ is close to 1. 

A comparison of these performance bounds under different number of placed sensors with $\lambda = 1$ and $\lambda_{1} = 1$ is shown in Fig.~\ref{sensors_bound_disc_K}. Observe that $\beta_{2}$ (red line) still dominates $\beta_{1}$ (blue line) with significant advantages. Both $\beta_{1}$ and $\beta_{2}$ decrease as we place more number of sensors. Both of these experiments demonstrate the advantages of our bound over those proposed in \cite{conforti1984submodular}. They also illustrate that for set and string submodular sensor coverage problems, employing a greedy strategy is a satisfactory choice when the sensors have weaker detection abilities. If sensor placement is allowed at any location within the mission space, the selections produced by the greedy strategy in the discrete version can serve as a good initial solution.

\begin{figure}[hbt!]
    \centering
    \includegraphics[width=0.98\columnwidth]{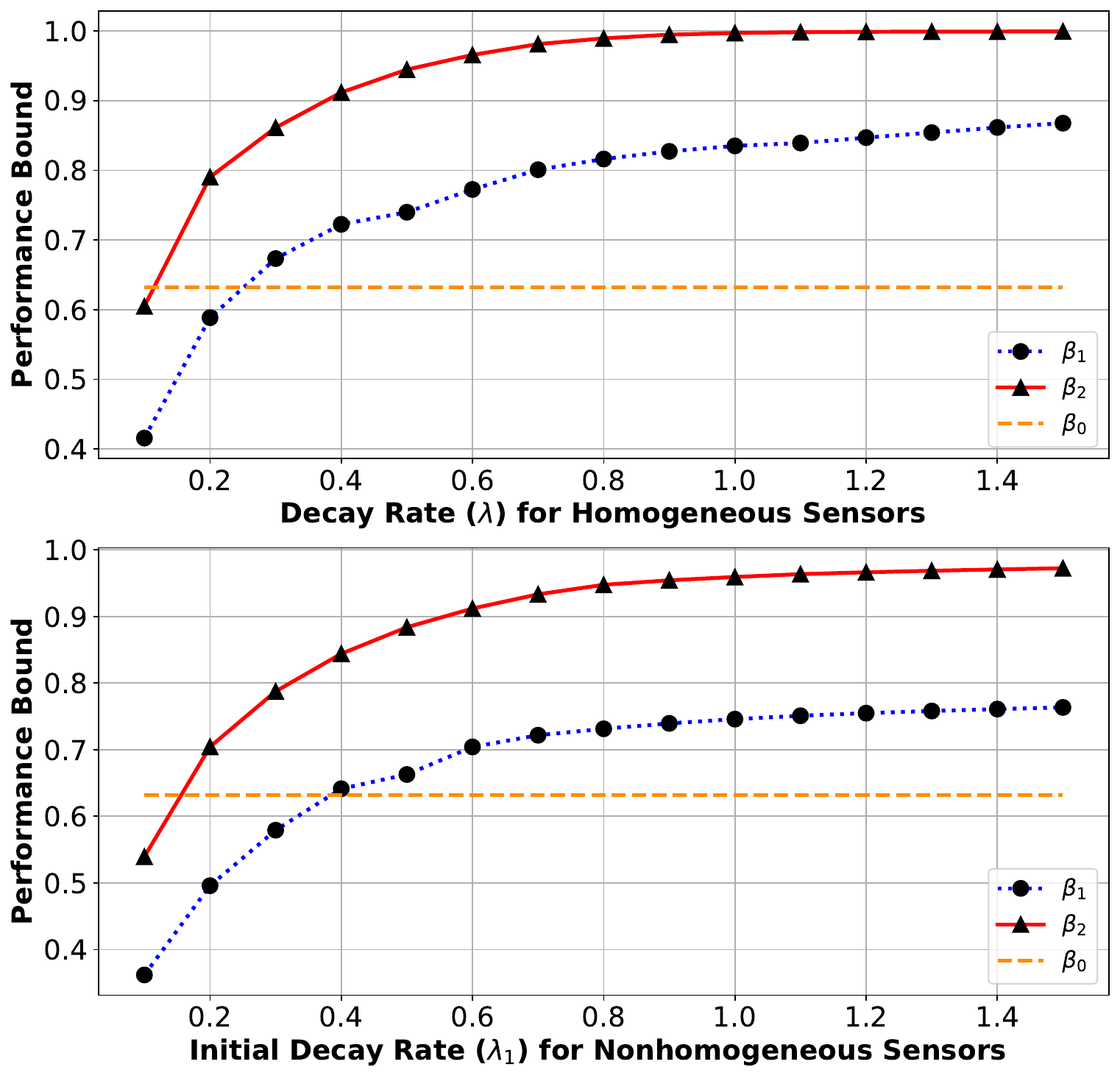}
    \caption{Performance bound comparison under different (initial) decay rates with number of placed sensors $K = 5$. 
    Upper Figure: Homogeneous Sensors; Lower Figure: Nonhomogeneous Sensors.}
    \label{sensors_bound_disc}
\end{figure}

\begin{figure}[hbt!]
    \centering
    \includegraphics[width=0.98\columnwidth]{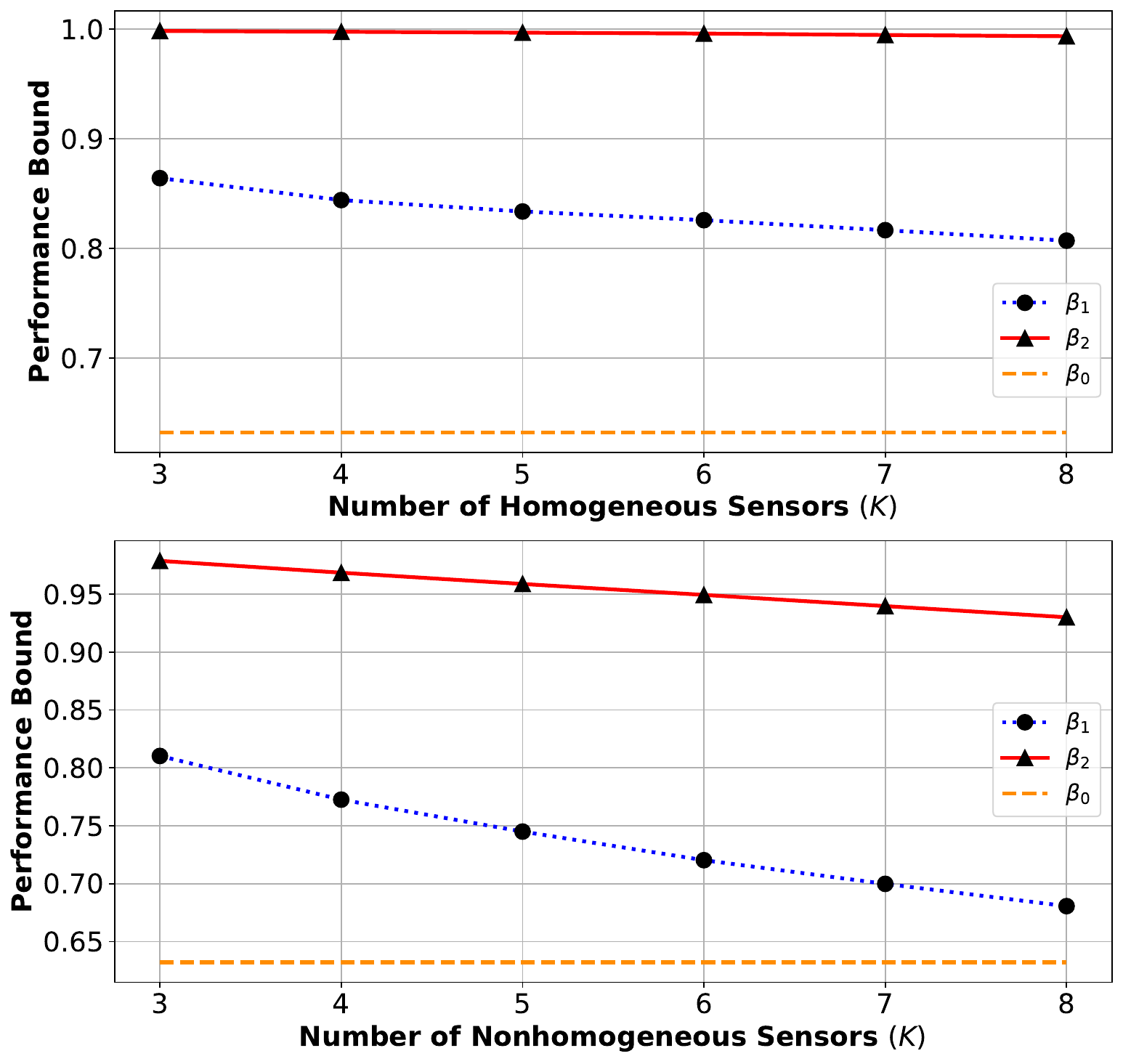}
    \caption{Performance bound comparison under different number of placed sensors. 
    Upper Figure: Homogeneous sensors with decay rate $\lambda = 1$; Lower Figure: Nonhomogeneous sensors with initial decay rate $\lambda_{1} = 1$.}
    \label{sensors_bound_disc_K}
\end{figure}

\subsection{Social Welfare Maximization with Black-Box Utility Functions}
Suppose that a humanitarian organization wishes to distribute a fixed number of development grants to several remote villages. Each village can convert these grants into tangible local benefits, such as improved access to clean water or basic education, with the impact measured by a standardized ``community improvement index." The fund aims to allocate the grants in such a way as to maximize the aggregate sum of these community improvement indices across all recipient villages. This distribution must be accomplished without exceeding the total number of development grants available. Such a problem is an example of a social welfare maximization problem, which we will now define more generally. 

 In the welfare maximization problem, we are tasked with partitioning a set of $M$ items $I_M = \{i_1, \ldots, i_M\}$ among a set $A_N = \{a_1, \ldots, a_N\}$ of $N$ agents to maximize the social welfare function \[ \sum_{j=1}^N u_j(S_j). \]
 Let $S_j \subseteq I_M$ be the subset of items we assign to agent $a_j$, and $u_j$ is the utility function of agent $a_j$ defined on the power set of $I_M$, $u_j: 2^{I_{M}} \to \mathbb{R}_{\geq 0}$, which outputs the value of assigning set $S_j$ to agent $a_j$.
 
~\\
 \noindent \textbf{The Set Case:}
 Formally, we view this as a control problem wherein we have $M$ epochs at which to make choices. At each epoch we choose an element $(i_k, a_j) \in I_M \times A_N$, where we impose the rule that each $i_k \in I_M$ can be chosen \textit{at most } once. Such a rule guarantees that the set of all the elements we choose results in a partition of $I_M$ amongst the $N$ agents, and the set of all such subsets of $I_M \times A_N$ obtained from this process forms a matroid \cite{Lehmann2001}. Therefore, we seek to choose a set of pairs $\{(i_{k_1}, a_{j_1}), \ldots, (i_{k_M}, a_{j_M})\}$ that forms a partition of $I_M$ among the agents of $A_N$ for the following optimization problem:

\begin{equation}
\begin{aligned}
\label{obj_fun_welfare}
    & \text{maximize } \sum_{i=1}^Nu_i(S_i) \\
    & \text{subject to } \bigsqcup_{i=1}^NS_i = I_M,
\end{aligned}
\end{equation}
where the square union symbol $\bigsqcup$ refers to the disjoint union of all of the sets $S_i$.

To apply our results to the optimization problem, we need to verify that conditions $\mathbf{A_1}$ and $\mathbf{A_2}$ are satisfied. The fact that the domain forms a set matroid immediately gives us that $\mathbf{A_1}$ holds by Lemma 2.2 of \cite{conforti1984submodular}. To see that $\mathbf{A_2}$ holds, we need to impose the following condition on the utility functions of the agents. For any $j \in \{1,\ldots, N\}$ and any item $i_k \in I_M$, we have $u_j(\{i_k\}) \geq u_j(S_j \cup \{i_k\}) - u_j(S_j)$, where $S_j \subset I_M \setminus{ \{i_k\} }$. In other words, the initial value of assigning an item at epoch one serves as an upper bound for the incremental value obtained from adding that item to any set later on. Note that this condition is far weaker than submodularity, as the values obtained by assigning an item to an agent who has already received different items are allowed to fluctuate as long as they do not exceed the initial value of that item for that agent. From the aforementioned condition, we can see that $\mathbf{A_2}$ is satisfied by definition. We will assume that the utility functions are black-box functions whose values on elements are randomly generated in a manner to be described in the experiments section. 

~\\
\noindent \textbf{The String Case:}
To modify the aforementioned problem into a string optimization problem, we allow the utility functions of all agents to randomly change at each epoch. We again assume that the initial value of assigning each item at epoch one serves as an upper bound for the incremental value of receiving that item later on. Given that each item can be assigned at most once, we see that sum of the top $M$ initial values at epoch one still serves as an upper bound by \thref{applicationuse} even though we cannot guarantee that $o_i \in \bbS(G_{i-1})$ for all $i \in \{1, \ldots, M\}$.

\begin{figure}[hbt!]
    \centering
    \includegraphics[width=0.98\columnwidth]{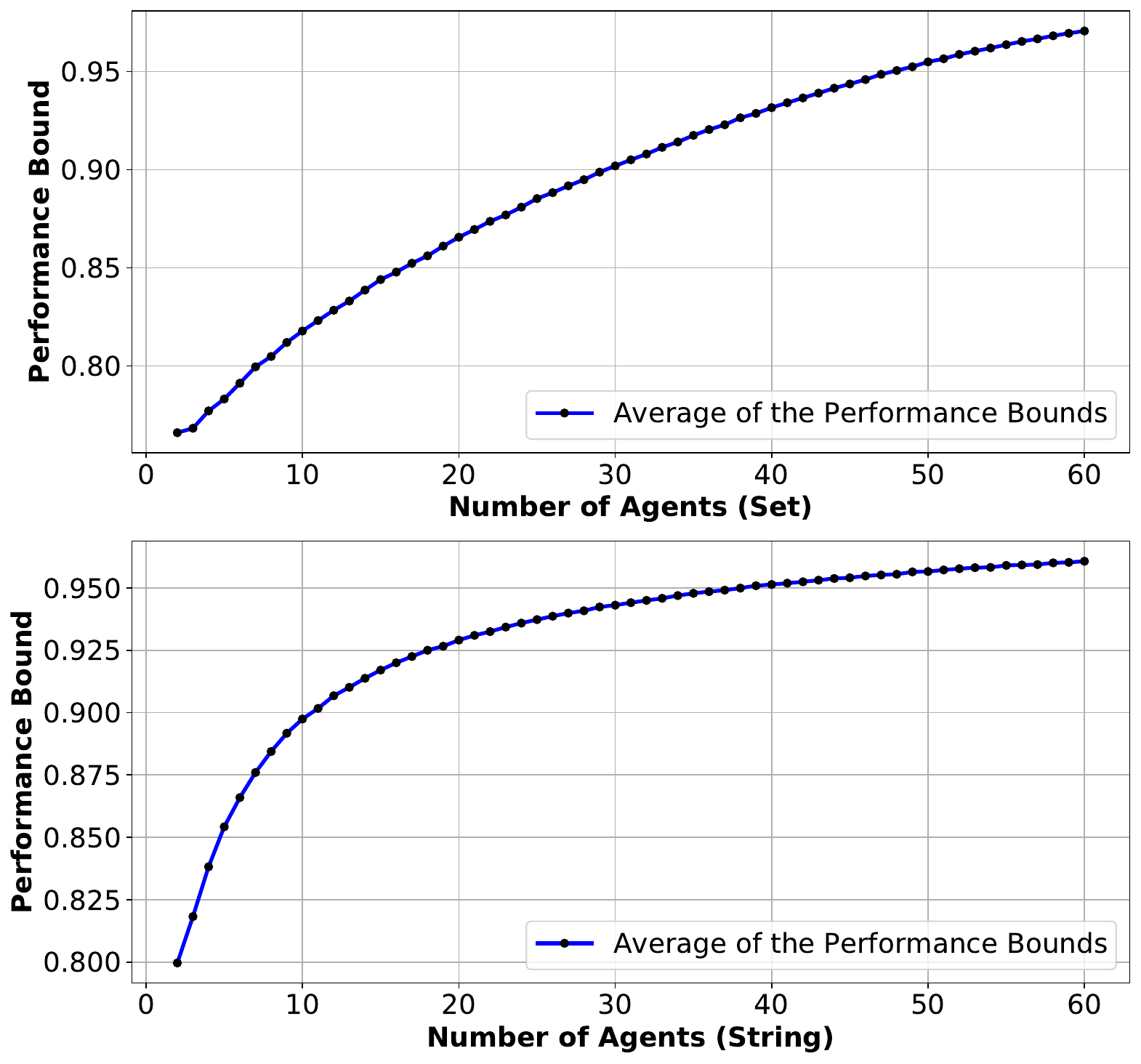}
    \caption{Performance bound comparison for set utility functions (upper figure) and string utility functions 
(lower figure). }
    \label{welfare_bound}
\end{figure}
~\\
\noindent \textbf{Experiment:} In our experiments, we have 60 items to distribute. Each utility function is a black-box function whose incremental gains for all $60$ items are randomly generated from $\text{Unif}(0,100)$. In Fig.~\ref{welfare_bound}, each point on the graph has its $x$-coordinate as the number of agents, and the $y$-coordinate as the average of the performance bounds of the greedy scheme after 1000 trials. A $95\%$ confidence interval for the average after 1000 trials is too small to be visible, and thus not included in the graph. 

We observe that as the number of agents increases, the performance bound of the greedy scheme also increases and trends towards optimal. We should note that the high performance bounds in our chosen examples are not anomalous, and instances where the performance bounds are less than $(1-e^{-1}) \approx 0.63$ are difficult to find. We also observe that the greedy scheme performs better in the string case than the set case. We believe this improved performance stems from the increased randomization of the values that are available to assign for each agent. This gives the greedy scheme more opportunity to choose larger values that are closer to the increments along the optimal string. Therefore, in situations where one is dealing with black-box utility functions with randomly generated values,  we could employ a greedy strategy since it is likely to be close to optimal under the conditions we described above.

\addtolength{\textheight}{-3cm}   

\section{CONCLUSION AND FUTURE WORK}
We presented an easily computable lower bound with minimal assumptions for the performance of the greedy scheme relative to optimal scheme in string optimization problems. We then generalized two greedy curvature bounds in \cite{conforti1984submodular} and proved that our bound is superior to them. We also provided a counterexample that demonstrates that the $\alpha_G'$ greedy curvature bound in \cite{conforti1984submodular} is incorrect. We concluded with applications demonstrating the superiority of our result for monotone submodular set functions on matroids, an application to a monotone submodular string functions on string matroids, and lastly in problems where the objective function is not submodular and thus the bounds of \cite{conforti1984submodular} no longer applies. Future work would include applying our new bound to reinforcement learning problems and developing computable performance bound for distributed string optimization problems.

\section*{Appendix}
\printglossaries

\section*{References}
\bibliographystyle{ieeetr}
\bibliography{bib}

\begin{IEEEbiography}[{\includegraphics[width=1in,height=1.25in,clip,keepaspectratio]{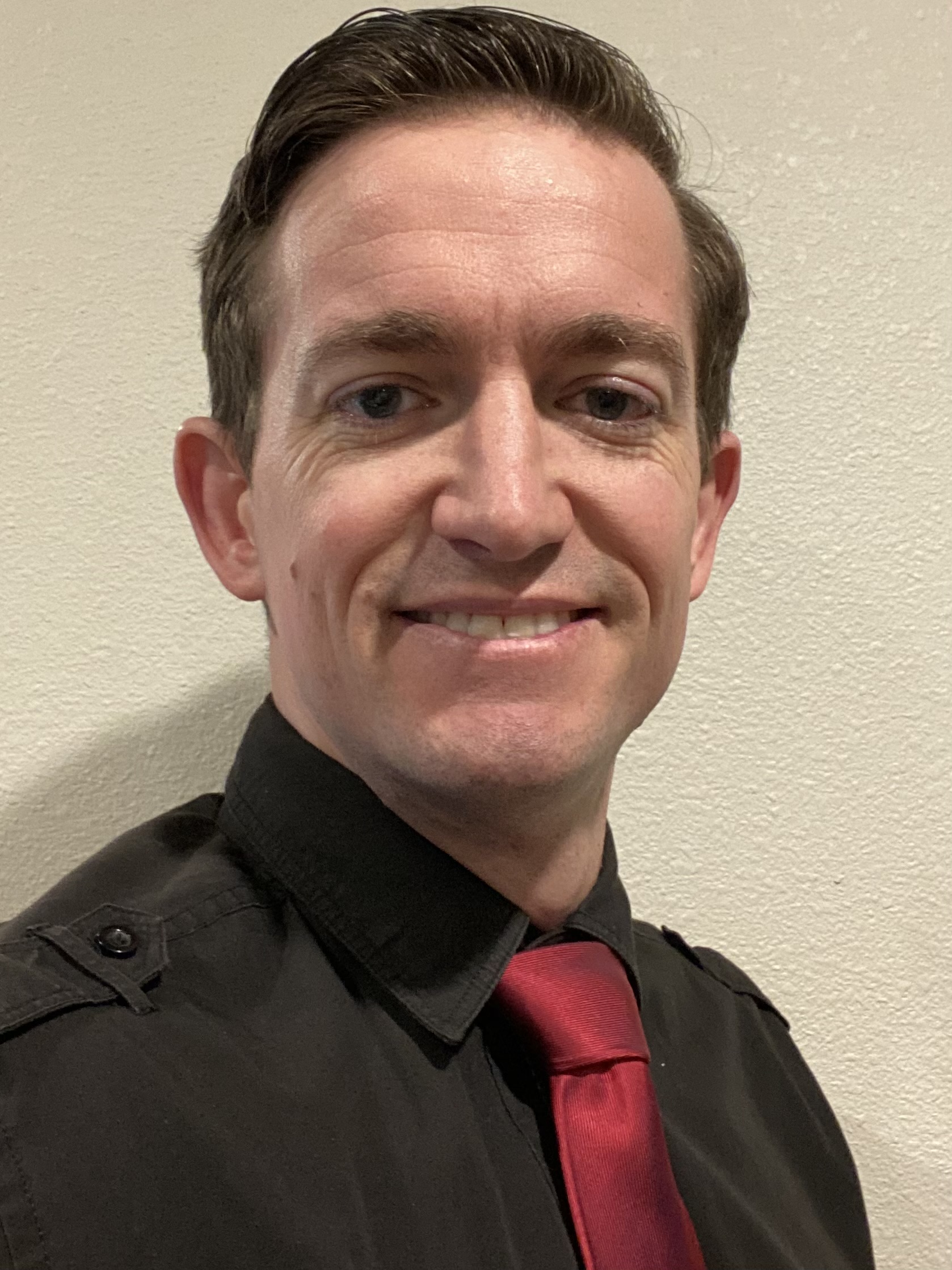}}]{Brandon Van Over} (Graduate Student Member, IEEE) received the B.S. degree in mathematics from the University of California, Berkeley in 2017, and the M.S. degree in mathematics from San Francisco State University in 2020. He is currently
pursuing a Ph.D. degree in electrical engineering
with Colorado State University, Fort Collins, CO,
USA. He placed first in the California State University research competition (2019), and during the same year received the Achievement Reward for College Scientists Scholarship (ARCS) of \$10,000. His research interests include optimization, machine learning, stochastic processes, and game theory.
\end{IEEEbiography}

\begin{IEEEbiography}
[{\includegraphics[width=1in,height=1.25in,clip,keepaspectratio]{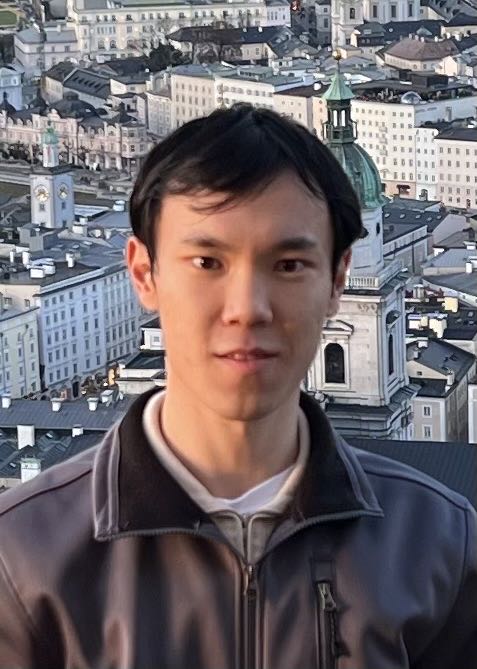}}]{Bowen Li} (Graduate Student Member, IEEE) received the B.S. degree in mathematics from Central China Normal University, Wuhan, China, and Colorado State University, Fort Collins, CO, USA, in 2018, through a dual-degree program. He received the M.S. degree in statistics from the University of Minnesota, Twin Cities, Minneapolis, MN, USA, in 2020. He is currently pursuing a Ph.D. degree in electrical engineering with Colorado State University, Fort Collins, CO,
USA. His research interests include statistical signal processing, optimization, and machine learning.
\end{IEEEbiography}

\begin{IEEEbiography}
[{\includegraphics[width=1in,height=1.25in,clip,keepaspectratio]{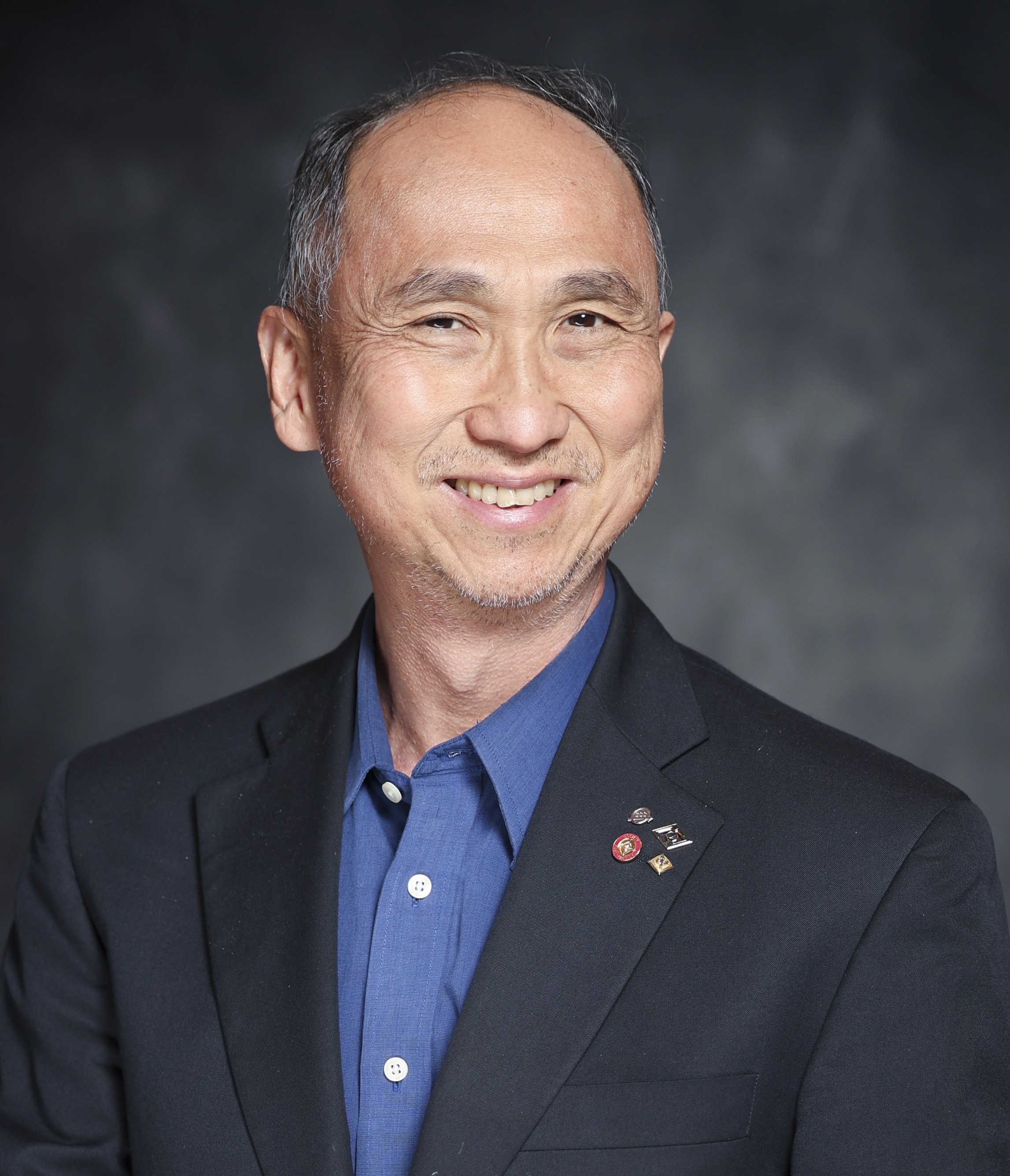}}]{Edwin K. P. Chong} (Fellow, IEEE) received the B.E.(Hons.) degree with First Class Honors from the University of Adelaide, South Australia, in 1987; and the M.A. and Ph.D. degrees in 1989 and 1991, respectively, both from Princeton University, where he held an IBM Fellowship. He joined the School of Electrical and Computer Engineering at Purdue University in 1991, where he was named a University Faculty Scholar in 1999. Since August 2001, he has been a Professor (and currently Head) of Electrical and Computer Engineering and Professor of Mathematics at Colorado State University. He coauthored the best-selling book, \emph{An Introduction to Optimization with Applications to Machine Learning} (5th Edition, Wiley, 2023). 

Prof.~Chong received the NSF CAREER Award in 1995 and the ASEE Frederick Emmons Terman Award in 1998. He was a co-recipient of the 2004 Best Paper Award for a paper in the journal Computer Networks. In 2010, he received the IEEE Control Systems Society Distinguished Member Award. He was the founding chairman of the IEEE Control Systems Society Technical Committee on Discrete Event Systems and served as an IEEE Control Systems Society Distinguished Lecturer. He was a Senior Editor of the IEEE Transactions on Automatic Control. He was the General Chair for the 2011 Joint 50th IEEE Conference on Decision and Control and European Control Conference. He has served as a member of the IEEE Control Systems Society Board of Governors and as Vice President for Financial Activities until 2014. He served as President in 2017.
\end{IEEEbiography}

\begin{IEEEbiography}
[{\includegraphics[width=1in,height=1.25in,clip,keepaspectratio]{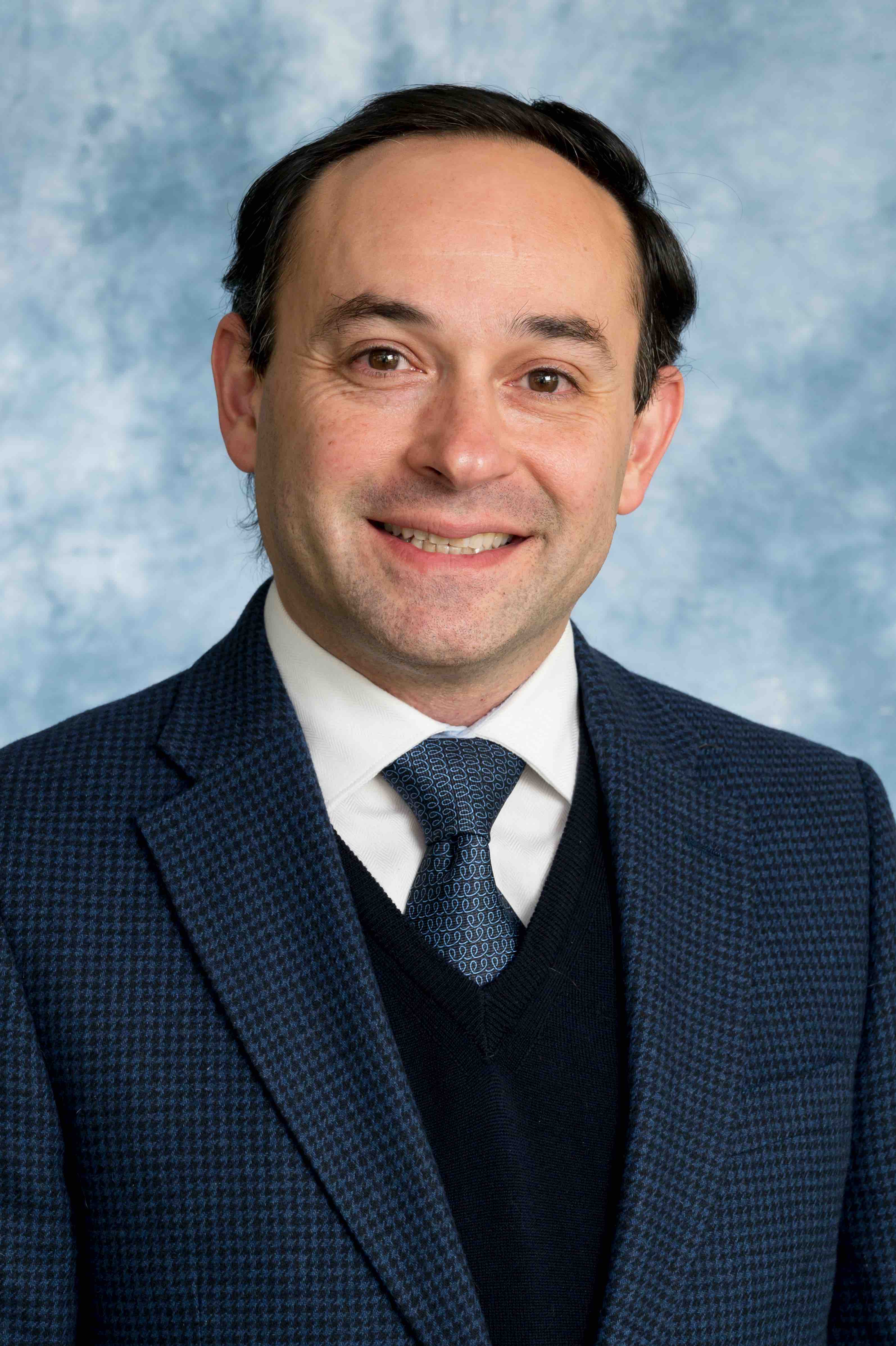}}]{Ali Pezeshki} (Senior Member, IEEE) received the B.Sc. and M.Sc. degrees in electrical engineering from the University of Tehran, Tehran, Iran,
in 1999 and 2001, respectively, and the Ph.D. degree in electrical engineering from Colorado State University, in 2004. In 2005, he was a Postdoctoral Research Associate with the Electrical and Computer Engineering Department, Colorado State University. From January 2006 to August 2008, he was a Postdoctoral Research Associate with the Program in Applied and Computational Mathematics,
Princeton University. In August 2008, he joined as a Faculty Member with Colorado State University, where he is currently a Professor with the Department of Electrical and Computer Engineering and the Department
of Mathematics. His research interests include statistical signal processing, machine learning, optimization, coding theory, geometry, applied harmonic
analysis, and bioimaging. He served on the Editorial Board for IEEE ACCESS from 2012 to 2018. 
\end{IEEEbiography}

\end{document}